\newcommand{\defn}[1]{{\textit{\textbf{\boldmath #1}}}\xspace}
\renewcommand{\epsilon}{\varepsilon}
\crefname{equation}{}{} 
\crefname{enumi}{Step}{} 
\title{Incremental Edge Orientation in Forests}
\titlerunning{} 
\author{Michael A. Bender}{Stony Brook University}{bender@cs.stonybrook.edu}{}{}
\author{Tsvi Kopelowitz}{
Bar-Ilan University
}{kopelot@gmail.com}{}{}
\author{William Kuszmaul}{Massachusetts Institute of Technology}{kuszmaul@mit.edu}{}{}
\author{Ely Porat}{Bar-Ilan University}{porately@cs.biu.ac.il}{}{}
\author{Clifford Stein}{Columbia University}{cliff@ieor.columbia.edu}{}{}
\authorrunning{}
\keywords{edge orientation, graph algorithms, Cuckoo hashing, hash functions}
\newcommand{\punt}[1]{}
\newcommand{\E}{\mathbb{E}}
\def\@copyrightspace{\relax}
\renewcommand{\defn}[1]       {{\textit{\textbf{\boldmath #1}}}}
\newcommand{\pparagraph}[1]{\vspace{0.09in}\noindent{\bf \boldmath #1}}
\renewcommand{\paragraph}[1]{\vspace{0.09in}\noindent{\bf \boldmath #1.}} 
\newcommand{\poly}{\mbox{poly}}
\date{}
\renewcommand{\epsilon}{\varepsilon}
\newcommand{\seclabel}[1]    {\label{sec:#1}}
\begin{document}

\maketitle




\sloppy


  \sloppy
  \begin{abstract}
    For any forest $G = (V, E)$ it is possible to orient the edges $E$ so that no vertex in $V$ has out-degree greater than $1$. This paper considers the incremental edge-orientation problem, in which the edges $E$ arrive over time and the algorithm must maintain a low-out-degree edge orientation at all times. We give an algorithm that maintains a maximum out-degree of $3$ while flipping at most $O(\log \log n)$ edge orientations per edge insertion, with high probability in $n$. The algorithm requires worst-case time $O(\log n \log \log n)$ per insertion, and takes amortized time $O(1)$. The previous state of the art required up to $O(\log n / \log \log n)$ edge flips per insertion.

       We then apply our edge-orientation results to the problem of
    dynamic Cuckoo hashing. The problem of designing simple families
    $\mathcal{H}$ of hash functions that are compatible with Cuckoo
    hashing has received extensive attention. These families
    $\mathcal{H}$ are known to satisfy \emph{static guarantees}, but
    do not come typically with \emph{dynamic guarantees} for the running time of
    inserts and deletes. We show how to transform static guarantees
    (for $1$-associativity) into near-state-of-the-art dynamic
    guarantees (for $O(1)$-associativity) in a black-box
    fashion. Rather than relying on the family $\mathcal{H}$ to supply
    randomness, as in past work, we instead rely on randomness within
    our table-maintenance algorithm.

\end{abstract}



\thispagestyle{empty}

\newpage

\setcounter{page}{1}

\section{Introduction}
\seclabel{intro}


The general problem of maintaining low-out-degree edge orientations of
graphs has been widely studied and has found a broad
range of applications throughout algorithms (see, e.g., work on sparse graph representations \cite{BF99},  maximal matchings \cite{NeimanS13, HTZ14, KKPS14, BB17, BS15, BS16}, dynamic matrix-by-vector multiplication \cite{KKPS14}, etc.). However, some of the most basic and fundamental versions of the graph-orientation problem have remained unanswered.

This paper considers the problem of incremental edge
orientation in forests. Consider a sequence of edges
$e_1, e_2, \ldots, e_{n - 1}$ that arrive over time, collectively
forming a tree. As the edges arrive, one must maintain an
\defn{orientation} of the edges (i.e., to assign a direction to each
edge) so that no vertex ever has out-degree greater than $O(1)$. The
orientation can be updated over time, meaning that orientations of old
edges can be flipped in order to make room for the newly inserted
edges. The goal is achieve out-degree $O(1)$ while flipping as few
edges as possible per new edge arrival.

Forests represent the best possible case for edge orientation: it 
is always possible to construct an orientation with maximum out-degree
$1$. But, even in this seemingly simple case, no algorithms are known 
that achieve better than $O(\log  / \log \log n)$ edge flips per edge 
insertion \cite{KKPS14}. A central result of this paper is that, 
by using randomized and intentionally non-greedy edge-flipping one can 
can do substantially better, achieving $O(\log \log n)$ edges flips per insertion.

\pparagraph{A warmup: two simple algorithms.}
As a warmup let us consider two simple algorithms for incremental
edge-orientation in forests.

The first algorithm never flips any edges but allows the maximum out-degree of each vertex to be as high as $O(\log n)$. When an edge $(u, v)$ is added to the graph, the algorithm examines the connected components $T_u$ and $T_v$ that are being connected by the edge, and determines which component is larger (say, $|T_v| \ge |T_u|$). The algorithm then orients the edge from $u$ to $v$, so that it is directed out of the smaller component. Since the new edge is always added to a vertex whose connected component at least doubles in size, the maximum out-degree is $\lceil \log n \rceil$.

The second algorithm guarantees that the out-degree will always be 1, but at the cost of flipping more edges. As before, when $(u,v)$ is added the algorithm orients the edge from $u$ to $v$. If this increments the out-degree of $u$ to $2$, then the algorithm follows the directed path $P$ in $T_u$ starting from $u$ (and such that the edge $(u,v)$ is not part of $P$) until a vertex $r$ with out-degree 0 is reached.  The algorithm then flips the edge orientations on $P$, which increases the out-degree of $r$ to be $1$ and reduces the out-degree of $u$ to be $1$. Since every edge that is flipped is always part of a connected component that has just at least doubled in size, the number of times each edge is flipped (in total across all insertions) is at most $\lceil \log n \rceil$ and so the amortized time cost per insertion is $O(\log n)$.\footnote{By allowing for a maximum out-degree of $2$, the bound of $O(\log n)$ on the number of edges flipped can be improved from being amortized to worst-case. In particular, for any vertex $v$ there is always a (directed) path of length $O(\log n)$ to another vertex with out-degree $1$ or less (going through vertices with out-degree $2$); by flipping the edges in such a path, we can insert a new edge at the cost of only $O(\log n)$ flips.}

These two algorithms sit on opposite sides of a tradeoff curve. In one case, we have maximum out-degree $O(\log n)$ and at most $O(1)$ edges flipped per insertion, and in the other we have maximum out-degree $O(1)$ and at most $O(\log n)$ (amortized) flips per insertion. This raises a natural question: \emph{what is the optimal tradeoff curve between the maximum out-degree and the number of edges flipped per insertion?} 

\pparagraph{Our results.}
We present an algorithm for incremental edge orientation in forests that satisfies the following guarantees with high probability in $n$:
\begin{itemize}
\item the maximum out-degree never exceeds $3$;
\item the maximum number of edges flipped per insertion is $O(\log \log n)$; 
\item the maximum time taken by any insertion is $O(\log n \log \log n)$;
\item and the amortized time taken (and thus also the amortized number of edges flipped) per insertion is $O(1)$. 
\end{itemize}

An interesting feature of this result is that the aforementioned tradeoff curve is actually quite different than it first seems: by increasing the maximum out-degree to $3$ (instead of $2$ or $1$), we can decrease the maximum number of edges flipped per insertion all the way to $O(\log \log n)$.

In fact, a similar phenomenon happens on the other side of the tradeoff curve. For any $\epsilon$, we show that it is possible to achieve a maximum out-degree of $\log^{\epsilon} n + 1$ while only flipping $O(\epsilon^{-1})$ edges per insertion. Notably, this means that, for any positive constant $c$, one can can achieve out-degree $(\log n)^{1/ c}$ with $O(1)$ edges flipped per insertion. 

A key idea in achieving the guarantees above is to selectively leave vertices with low out-degrees ``sprinkled'' around the graph, thereby achieving an edge orientation that is amenable to future edges being added. Algorithmically, the main problem that our algorithm solves is that of high-degree vertices clustering in a ``hotspot'', which could then force a single edge-insertion to invoke a large number of edge flips. 


\pparagraph{Related work on edge orientations.}
The general problem of maintaining low-out-degree orientations of
dynamic graphs has served as a fundamental tool for many
problems. Brodal and Fagerberg \cite{BF99} used low-degree edge
orientations to represent dynamic sparse graphs -- by assigning each
vertex only $O(1)$ edges for which it is responsible, one can then
deterministically answer adjacency queries in $O(1)$ time. Low-degree
edge orientations have also been used to maintain maximal matchings in
dynamic graphs~\cite{NeimanS13, HTZ14, KKPS14, BB17}, and this
technique remains the state of the art for graphs with low
arboricity. Other applications include dynamic matrix-by-vector
multiplication~\cite{KKPS14}, dynamic shortest-path queries in planar
graphs~\cite{KK06}, and approximate dynamic maximum
matchings~\cite{BS15,BS16}. 

The minimum out-degree attainable by any orientation of a graph is
determined by the graph's pseudo-arboricity $\alpha$. As a result, the
algorithmic usefulness of low out-degree orientations is most
significant for graphs that have low pseudo-arboricity.  This makes
forests and pseudoforests (which are forests with one extra edge per
component) especially interesting, since they represent the case of
$\alpha = 1$ and thus always allow for an orientation with out-degree~$1$. 

Whereas this paper focuses on edge orientation in incremental forests
(and thus also incremental pseudoforests), past work has considered a
slightly more general problem~\cite{BF99, KKPS14, BB17, HTZ14},
allowing for edge deletions in addition to edge insertions, and also
considering dynamic graphs with pseudo-arboricities $\alpha > 1$. 
Brodal and Fagerberg gave an algorithm that achieved out-degree $O(\alpha)$ with amortized running time that is guaranteed to be constant competitive with that of any algorithm; they also showed that in the case of $\alpha \in O(1)$, it is possible to achieve constant out-degree with amortized time $O(1)$ per insertion and $O(\log n)$ per deletion~\cite{BF99}. For worst-case guarantees, on the other hand,  the only algorithm known to achieve sub-logarithmic bounds for both out-degree and edges flipped per insertion is that of Kopelowitz et al.~\cite{KKPS14}, which achieves $O(\log n / \log \log n)$ for both, assuming $\alpha \in O(\sqrt{\log n})$. In the case of incremental forests, our results allow for us to improve substantially on this, achieving a worst-case bound of $O(\log \log n)$ edges flipped per insertion (with high probability) while supporting maximum out-degree $O(1)$. An interesting feature of our algorithm is that it is substantially 
different than any of the past algorithms, suggesting that
the fully dynamic graph setting (with $\alpha > 1$) may warrant revisiting. 

Our interest in the incremental forest case stems in part from its importance
for a specific application: Cuckoo hashing. As we shall now discuss,
our results on incremental edge orientation immediately yield a
somewhat surprising result on Cuckoo hashing with dynamic guarantees.

\subsection{An Application to Cuckoo Hashing:  From Static to Dynamic Guarantees via Non-Greedy Eviction}

A \defn{$s$-associative Cuckoo hash table} \cite{pagh2001cuckoo, devroye2003cuckoo, li2014algorithmic,panigrahy2005efficient} consists of $n$ \defn{bins}, each of which has $s$ \defn{slots}, where $s$ is a constant typically between $1$ and $8$ \cite{pagh2001cuckoo, li2014algorithmic}. Records are inserted into the table using two hash functions $h_1, h_2$, each of which maps records to bins. The invariant that makes Cuckoo hashing special is that, if a record $x$ is in the table, then $x$ must reside in either bin $h_1(x)$ or $h_2(x)$. This invariant ensures that query operations \emph{deterministically} run in time $O(1)$.

When a new record $x$ is inserted into the table, there may not initially be room in either bin $h_1(x)$ or $h_2(x)$. In this case, $x$ \defn{kicks out} some record $y_1$ in either $h_1(x)$ or $h_2(x)$. This, in turn, forces $y_1$ to be inserted into the other bin $b_2$ to which $y_1$ is hashed. If bin $b_2$ \emph{also} does not have space, then $y_1$ kicks out some record $y_2$ from bin $b_2$, and so on. This causes what is known as a \defn{kickout chain}. Formally, a kickout chain takes a sequence of records $y_1, y_2, \ldots, y_j$ that reside in bins $b_1, b_2, \ldots, b_j$, respectively, and relocates those records to instead reside in bins $b_2, b_3, \ldots, b_{j + 1}$, respectively, where for each record $y_i$ the bins $b_i$ and $b_{i + 1}$ are the two bins to which $h_1$ and $h_2$ map $y_i$. The purpose of a kickout chain is to free up a slot in bin $b_1$ so that the newly inserted record can reside there. Although Cuckoo hashing guarantees constant-time queries, insertion operations can sometimes incur high latency due to long kickout chains.

The problem of designing simple hash-function families for Cuckoo hashing has received extensive attention \cite{mitzenmacher2008simple, patracscu2012power,aamand2018power,dietzfelbinger2009risks, aumuller2014explicit,dietzfelbinger2003almost,aumuller2016simple,pagh2001cuckoo, cohen2009bounds}. Several natural (and widely used) families of hash functions are known \emph{not} to work \cite{dietzfelbinger2009risks,cohen2009bounds}, and it remains open whether there exists $k = o(\log n)$ for which $k$-independence suffices \cite{mitzenmacher2009some}. This has led researchers to design and analyze \emph{specific families} of simple hash functions that have low independence but that, nonetheless, work well with Cuckoo hashing \cite{mitzenmacher2008simple, patracscu2012power,aamand2018power, aumuller2014explicit,dietzfelbinger2003almost,aumuller2016simple,pagh2001cuckoo}. Notably, Cuckoo hashing has served as one of the main motivations for the intensive study of tabulation hash functions \cite{aamand2018power, patracscu2012power, puatracscu2013twisted,thorup2017fast,dahlgaard2014approximately}.

Work on hash-function families for cuckoo hashing \cite{mitzenmacher2008simple, patracscu2012power,aamand2018power, aumuller2014explicit,dietzfelbinger2003almost,aumuller2016simple,pagh2001cuckoo} has focused on offering a \defn{static guarantee}: for any set $X$ of $O(n)$ records, there exists (with reasonably high probability) a valid 1-associative hash-table configuration that stores the records $X$. This guarantee is static in the sense that it does not say anything about the speed with which insertion and deletion operations can be performed. 


On the other hand, if the hash functions are fully random, then a strong \defn{dynamic guarantee} is known. Panigrahy \cite{panigrahy2005efficient} showed that, using bins of size two, insertions can be implemented to incur at most $\log \log n + O(1)$ kickouts, and to run in time at most $O(\log n)$, with high probability in $n$. Moreover, the expected time taken by each insertion is $O(1)$.

The use of bin sizes greater than one is essential here, as it gives the data structure algorithmic flexibility in choosing which record to evict from a bin. Panigrahy \cite{panigrahy2005efficient} uses breadth-first search in order to find the shortest possible kickout chain to a bin with a free slot. The fact that the hash functions $h_1$ and $h_2$ are fully random ensures that, with high probability, the search terminates within $O(\log n)$ steps, thereby finding a kickout chain of length $\log \log n + O(1)$.

If a family of hash functions has sufficiently strong randomness properties (e.g., the family of \cite{dietzfelbinger2003almost}) 
then one can likely recreate the guarantees of \cite{panigrahy2005efficient} by directly replicating the analysis. 
For other families of hash functions \cite{mitzenmacher2008simple, patracscu2012power,aamand2018power, aumuller2014explicit,dietzfelbinger2003almost,aumuller2016simple,pagh2001cuckoo}, however, it is unclear what sort of dynamic guarantees are or are not possible.

This raises a natural question: \emph{does there exist a similar dynamic guarantee to that of \cite{panigrahy2005efficient} when the underlying hash functions are not fully random -- in particular, if we know only that a hash family $\mathcal{H}$ offers a static guarantee, but we know nothing else about the structure or behavior of hash functions in $\mathcal{H}$, is it possible to transform the static guarantee into a dynamic guarantee? }

\pparagraph{Our results on Cuckoo hashing: a static-to-dynamic transformation.} We answer this question in the affirmative by presenting a new algorithm, the Dancing-Kickout Algorithm, for selecting kickout chains during insertions in a Cuckoo hash table. Given any hash family $\mathcal{H}$ that offers a $1$-associative static guarantee, we show that the same hash family can be used to offer an $O(1)$-associative dynamic guarantee. In particular, the Dancing-Kickout Algorithm supports both insertions and deletions with the following promise: as long as the static guarantee for $\mathcal{H}$ has not failed, then with high probability, each insertion/deletion  incurs at most $O(\log \log n)$ kickouts, has amortized time (and therefore number of kickouts) $O(1)$, and takes time at most $O(\log n \log \log n)$. We also extend our results to consider families of hash functions $\mathcal{H}$ that offer relaxed static guarantees -- that is, our results still apply to families either make assumptions about the input set \cite{mitzenmacher2008simple} or require the use of a small auxiliary stash~\cite{aumuller2014explicit, kirsch2010more}. 

Unlike prior algorithms, the Dancing-Kickout Algorithm takes a \emph{non-greedy} approach to record-eviction. The algorithm will sometimes continue a kickout chain \emph{past} a bin that has a free slot, in order to avoid ``hotspot clusters'' of full bins within the hash table. These hotspots are avoided by ensuring that, whenever a bin surrenders its final free slot, the bin is at the end of a reasonably long random walk, and is thus itself a ``reasonably'' random bin. Intuitively, the random structure that the algorithm instills into the hash table makes it possible for the hash functions from $\mathcal{H}$ to not be fully random.

The problem of low-latency Cuckoo hashing is closely related to the problem of incremental edge orientation. In particular, the static guarantee for a Cuckoo hash table (with bins of size one) means that the edges in a certain graph form a pseudoforest. And the problem of dynamically maintaining a Cuckoo hash table (with bins of size $O(1)$) can be solved by dynamically orienting the pseudoforest in order to maintain constant out-degrees. The Dancing-Kickout algorithm is derived by applying our results for incremental edge orientation along with several additional ideas to handle deletions.

In addition to maintaining $n$ bins, the Dancing-Kickout Algorithm uses an auxiliary data structure of size $O(n)$. The data structure incurs at most $O(1)$ modifications per insertion/deletion. Importantly, the auxiliary data structure is not accessed during queries, which continue to be implemented as in a standard Cuckoo hash table.

Our results come with an interesting lesson regarding the symbiotic relationship between Cuckoo hashing and edge orientation. There has been a great deal of past work on Cuckoo hashing that focuses on parameters such as associativity, number of hash functions, and choice of hash function. 
We show that a new dimension that also warrants attention:  how to dynamically maintain the table to ensure that a short kickout chain exists for every insertion. 
Algorithms that greedily optimize \emph{any given operation} (e.g., random walk and BFS) may inadvertently structure the table in a way that compromises the performance of some later operations. In contrast, the non-greedy approach explored in this paper is able to offer strong performance guarantees for all operations, even if the hash functions being used are far from fully random. 
The results in this paper apply only to $1$-associative static guarantees, and are therefore innately limited in the types of dynamic guarantees that they can offer (for example, we cannot hope to support a load factor of better than $0.5$). An appealing direction for future work is to design and analyze eviction algorithms that offer strong dynamic guarantees in hash tables with either a large associativity or a large number of hash functions---it would be especially interesting if such guarantees could be used to support a load factor of $1 - q$ for an arbitrarily small positive constant $q$.


\pparagraph{Related work on low-latency hash tables.}
Several papers have used ideas from Cuckoo hashing as a parts of new data structures that achieve stronger guarantees. Arbitman et al. \cite{deamortized1} showed how to achieve a fully constant-time hash table by maintaining a polylogarithmic-size backyard consisting of the elements whose insertions have not yet completed at any given moment. Subsequent work then showed that, by storing almost all elements in a balls-in-bins system and then storing only a few ``overflow'' elements in a backyard Cuckoo hash table, one can construct a succinct constant-time hash table \cite{succinct}.\footnote{It is worth noting, however, that as discussed in \cite{succinct2}, the data structure of \cite{succinct} can be modified to use any constant-time hash table in place of deamortized Cuckoo hashing.}

Whereas the focus of these papers \cite{deamortized1, succinct} is to design new data structures that build on top of Cuckoo hashing, the purpose of our results is to consider \emph{standard} Cuckoo hashing but in the dynamic setting. In particular, our goal is to show that dynamic guarantees for Cuckoo hashing do not have to be restricted to fully random hash-functions; by using the Dancing-Kickout Algorithm for maintaining the Cuckoo hash table, \emph{any} family of hash functions that enjoys static guarantees can also enjoy dynamic guarantees. 


\subsection{Outline}
The paper proceeds as follows. In Section \ref{sec:technical}, we give
a technical overview of the algorithms and analyses in this paper. 
The overview is written in a way so that all of the major ideas in the paper 
are self contained. The
full details of the analyses are then given in appendices. Appendix
\ref{sec:alg} shows how to achieve $O(1)$ out-degree with
$O(\log \log n)$ edge flips per edge insertion; Appendix
\ref{sec:amortized} shows how to optimize the running time to be
$O(\log n \log \log n)$ per operation and $O(n)$ in total; Appendix
\ref{sec:tradeoff} then considers the tradeoff curve between out-degree
and number of edges flipped per insertion; finally, Appendix
\ref{sec:cuckoo} gives the full details of our application to Cuckoo
hashing.


\section{Technical Overview}
\label{sec:technical}

This section overviews the main technical ideas in the paper. We first describe our results for incremental edge orientation and then show how to apply those results to Cuckoo hashing.

\subsection{Edge Orientation with High-Probability Worst-Case Guarantees}

We begin by considering the problem of incremental edge orientation in
a forest. Let $e_1, \ldots, e_{n - 1}$ be a sequence of edges between
vertices in $V=\{v_1, \ldots, v_n\}$ such that the edges form a tree
on the vertices. As the edges arrive online, they always form a forest
on the vertices. Each edge can be thought of as combining two trees in
the forest into one. The goal is to maintain an orientation of the
edges so that no vertex has out-degree more than three.

In Appendix \ref{sec:alg}, we present a simple Monte-Carlo randomized
algorithm, called \defn{the Dancing-Walk Algorithm}\footnote{The name
  ``Dancing-Walk'' refers to the fact that the algorithm selects a
  chain of edges to flip by performing a random walk, but the walk
  sometimes ``dances around'' rather than greedily stopping at the
  earliest available point.}, which flips at most $O(\log \log n)$
edges per edge insertion. The algorithm has worst-case operation time
$O(\log n \log \log n)$, and can be modified to take constant
amortized time per edge insertion. In this section, we give an
overview of the algorithm and its analysis.

\paragraph{Augmenting paths}
Whenever a new edge $e_t = (v_1, v_2)$ is inserted, the algorithm
first selects a \defn{source vertex} $s_t \in \{v_1, v_2\}$. The
Dancing-Walk Algorithm always selects the source vertex $s_t$ to be in
the smaller of the two (undirected) trees that are being connected by
the edge $e_t$. As a rule, the algorithm will only flip edges within
that smaller tree, and never within the larger tree; as we shall see
later, this gives the algorithm certain natural combinatorial
properties that prevent an adversary from significantly manipulating
the algorithm's behavior.

If $s_t$'s out-degree is $1$ or smaller, then the algorithm simply
inserts edge $e$ to face out of $s_t$. 
Otherwise the algorithm  selects edges to reorient in order to 
decrement $s_t$'s out-degree---after reorienting these edges, the
algorithm will then insert edge $e$ facing out of $s_t$ as before. 

In order to decrement $s_t$'s out-degree, the algorithm uses a form of
path augmentation. The algorithm finds a directed path $P_t$ of edges
from the source vertex $s_t$ to some \defn{destination vertex}
$d$ whose out-degree is smaller than $3$. The algorithm then flips every
edge in the path $P_t$, which has the effect of decrementing the
out-degree of $s_t$ and incrementing the out-degree of $d_t$.

\paragraph{The challenge: hotspot clusters of dead vertices}
A natural approach to constructing the augmenting path $P_t$ is to
simply either (a) perform a breadth-first-search to find the shortest
path to a vertex with out-degree less than $3$, or (b) perform a
random walk down out-facing edges in search of a vertex with
out-degree less than $3$.

The problem with both of these approaches is that they do nothing to
mitigate hotspots of \defn{dead vertices} (i.e., vertices with the
maximum allowable out-degree of $3$). Dead vertices are problematic
because they cannot serve as the destination in an augmenting path. If
all of the vertices near the source $s_t$ are dead (i.e., $s_t$ is in
a \defn{hotspot cluster}), then the algorithm will be forced to incur
a large number of edge-flips on a single edge-insertion.

In order to avoid the formation of dead-vertex hotspots, the algorithm
must be careful to leave vertices that are alive ``sprinkled'' around
the graph at all times. Our algorithm forces the augmenting path $P_t$
to sometimes \emph{skip over} an alive vertex for the sake of
maintaining a healthy structure within the graph. As a rule, the
algorithm is only willing to kill a vertex $v$ if $v$ is at the end of
a reasonably long random walk, in which case the vertex $v$ being
killed is sufficiently random that it can be shown not to contribute
substantially to the creation of hotspots.

\paragraph{Constructing the augmenting path}
In order to construct $P_t$, the algorithm performs a random walk
beginning at the source vertex $s_t$, and stepping along a random
outgoing edge in each step of the walk\footnote{One small but
  important technicality is that if a vertex has out-degree $3$, then
  the random walk only chooses from the first \emph{two} of the
  outgoing edges. Since the random walk terminates when it sees any
  vertices with out-degree less than 2 (we will discuss this more
  shortly), it follows that every step in the random walk chooses
  between \emph{exactly} two edges to travel down. This is important
  so that every path that the random walk could take has equal
  probability of occurring.}.

If the random walk ever encounters a vertex with out-degree less than
$2$, then that vertex is selected as the destination
vertex. Otherwise, if all of the vertices encountered have out-degrees
$2$ and $3$, then the walk continues for a total of $c \log \log n$
steps. At this point, the vertex $w$ at which the random walk resides
is asked to \defn{volunteer} as the destination vertex.

If the volunteer vertex $w$ has out-degree less than $3$ (i.e., $w$ is
still alive), then it can be used as the destination vertex for
$P_t$. Otherwise, the random walk is considered a \defn{failure} and
is restarted from scratch. If $\Theta(\log n)$ random walks in a row
fail, then the algorithm also fails.

Note that the augmenting path $P_t$ may go through many vertices with
out-degrees $2$. The only such vertex that $P_t$ will consider as a
possible destination vertex, however, is the $(c \log \log n)$-th
vertex $w$. This ensures that the algorithm avoids killing vertices in
any highly predictable fashion -- the only way that the algorithm can
kill a vertex is if that vertex is the consequence of a relatively
long random process. 

\paragraph{Analyzing candidate volunteers}
For the $t$-th edge insertion $e_t$, let $D_t$ denote the
set of \defn{candidate volunteer} vertices $w$ that can be reached
from $s_t$ by a walk consisting of exactly $c \log \log n$ steps. To simplify the discussion
for now, we treat $D_t$ as having size at least
$2^{c \log \log n}$---that is, we ignore the possibility of a random
walk hitting vertices with out-degree $1$ or $0$. Such vertices can easily be incorporated into the analysis after the fact, since they only help the random walk terminate.

To analyze the algorithm we wish to show that, with high probability
in $n$, at least a constant fraction of the vertices in
$D_t$ have never yet volunteered. This, in turn, ensures
that each random walk has a constant probability of succeeding.

\paragraph{Two key properties}
In order to analyze the fraction of the candidate-volunteer set $D_t$ that has not yet
volunteered, we use two key properties of the algorithm:

\begin{itemize}
\item \textbf{The Sparsity Property: } During the $t$-th edge insertion, each
  element in $v \in D_t$ has probability at most
  $O(1 / \log^{c - 1} n)$ of being selected to volunteer, 
  because at most $O(\log n)$ random walks are performed, and each has
  probability at most $1 / \log^c n$ of volunteering $v$.

\item \textbf{The Load Balancing Property: }Each vertex $v$ in the graph is
  contained in at most $\log n$ candidate-volunteer sets $D_t$, because, whenever a new edge $e_t$ combines two
  trees, the algorithm performs random walks only in the smaller of the two trees. It follows that a vertex $v$ can only be contained in
  $D_t$ if the size of the (undirected) tree containing $v$
  at least doubles during the $t$-th edge insertion. 
\end{itemize}

These properties imply that each vertex $v$ in the graph has probability at most
$O(1 / \log^{c - 2} n)$ of \emph{ever} volunteering. The property of
volunteering is not independent between vertices. Nonetheless, by a simple inspection of the moment generating function for the number of volunteering vertices, one can
still prove a Chernoff-style bound on them. In particular, for any fixed set of $k$ vertices,
the probability that more than half of them volunteer is\footnote{The value $c$ is a constant in that it is a parameter of the algorithm that is independent of $n$. We place $c$ within Big-O notation here in order to keep track of its impact.}
\begin{equation}
  \frac{1}{\log^{\Omega(ck)} n}.
  \label{eq:rough_chernoff}
\end{equation}
We will be setting $k$ to be $|D_t| = \log^c n$, meaning that \eqref{eq:rough_chernoff} evaluates to
\begin{equation}
  \frac{1}{\log^{\Omega(c \log^c n)} n} \ll \frac{1}{\poly (n)}.
  \label{eq:rough_chernoff2}
\end{equation}

\paragraph{A problem:  adversarial candidate sets}
If $D_t$ were a fixed set of vertices (i.e., a function only of the edge-insertion sequence $e_1, \ldots, e_{n - 1}$), then the analysis would be complete by \eqref{eq:rough_chernoff2}. The problem is that $D_t$ is \emph{not} a fixed set of vertices, that is, $D_t$ is partially a function of the algorithm's random bits and past decisions. Indeed, the decisions of which vertices have volunteered in the past affect the edge-orientations in the present, which affects the set $D_t$ of vertices that can be reached by a directed walk of length $c \log \log n$.  

In essence, $D_t$ is determined by an adaptive adversary, meaning in the worst case that $D_t$ could consist entirely of volunteered
vertices, despite the fact that the vast majority of vertices in the
graph have not volunteered. The key to completing the analysis is to show that, although $D_t$ is determined by an
adaptive adversary, the power of that adversary is severely limited by the structure of the algorithm.

\paragraph{The universe of candidate sets}
Let $\mathcal{U}_t$ denote the universe of possible candidate sets
$D_t$. That is,
$$\mathcal{U}_t = \{X \subseteq V \mid \Pr[D_t = X] > 0\}.$$
In order to complete the analysis, we prove that the universe
$\mathcal{U}_t$ is actually remarkably small. In particular,
\begin{equation}
  |\mathcal{U}_t| \le \log^{O(\log^c n)} n.
  \label{eq:bounded_universe}
\end{equation}
By \eqref{eq:rough_chernoff}, the probability that there is a set $S\in \mathcal{U}_t$ such that more than half the elements in $S$ are volunteers is at most
\begin{align*}
  \frac{|\mathcal{U}_t|}{\log^{\Omega(c \log^c n)} n} & = \frac{\log^{O(\log^c n)}n }{\log^{\Omega(c \log^c n)} n}.
\end{align*}
If $c$ is a sufficiently large constant, then the denominator
dominates the numerator. With high probability, \emph{every} set in
the universe $\mathcal{U}_t$ behaves well as an option for $D_t$. This
solves the problem of $D_t$ being potentially adversarial.

\paragraph{Bounding the universe by pre-setting children}
We prove \eqref{eq:bounded_universe} by examining the potential
\defn{children} of each vertex $v$. For a vertex $v$, the
\defn{children} of $v$ are the vertices $u$ to which $v$ has an
outgoing edge. The set of children of $v$ can change over time as edges incident to $v$ are re-oriented.

The structure of the Dancing-Walk Algorithm is designed to severely limit
the set of vertices $u$ that can \emph{ever become} children of $v$.  During the
insertion of an edge $e_t$, the only vertex that can become $v$'s
child is the vertex $u$ that appears directly before $v$ on the path
from $s_t$ to $v$. Moreover, as is argued in the Load Balancing
Property, there are only $O(\log n)$ values of $t$ for which there
even \emph{exists} a path from $s_t$ to $v$ (at the time of the edge-insertion $e_t$). Thus we have the
following property:
\begin{itemize}
\item \textbf{The Preset-Children Property:} There exists a (deterministic) set of
  $O(\log n)$ vertices $C_v$ that contains all of $v$'s potential
  children. That is, no matter what random bits the algorithm uses,
  the children of $v$ will always come from the set $C_v$.
\end{itemize}

The Preset-Children Property can be used to bound the universe size
$|\mathcal{U}_t|$ in a very simple way. Recall that the nodes in
$D_t$ are the leaves of a $c \log \log n$-level search tree
$T_t$ rooted at $s_t$. The tree $T_t$ consists of
$O(\log^c n)$ nodes. By the Preset-Children Property, each node
$v \in T_t$ has only $\binom{|C_v|}{O(1)} \le \log^{O(1)} n$
options for whom its $O(1)$ children can be in $T_t$. It follows
that the total number of possibilities for $T_t$ is at most
$$\log^{O(|T_t|)} n \le \log^{O(\log^c n)} n.$$
Each possibility for $T_t$ corresponds to a possibility for the candidate-volunteer set
$D_t$ and thus to an element of the universe $\mathcal{U}_t$. This yields
the desired bound \eqref{eq:bounded_universe} on $|\mathcal{U}_t|$.

\paragraph{Analyzing the running time}
So far we have shown that, with high probability, at least half of the
elements in the candidate-volunteer set $D_t$ are eligible to volunteer as a destination
vertex. This implies that each random walk succeeds with constant
probability, and thus that the number of failed random walks during a
given edge-insertion is $O(\log n)$ with high probability. Thus, with
high probability, the algorithm succeeds on every edge-insertion, the
running time of each edge-insertion is $O(\log n \log \log n)$, and
the number of flipped edges per edge-insertion is $O(\log \log n)$.

\paragraph{The tradeoff between edges flipped and out-degree}
Appendix \ref{sec:tradeoff} explores the tradeoff between out-degree
and the maximum number of edges that are flipped per edge insertion.

We consider a modification of the Dancing-Walk Algorithm in which nodes
are permitted to have out-degree as large as $\log^{\epsilon} n + 1$
(instead of $3$) for some parameter $\epsilon$. Rather flipping the
edges in a random walk of length $c \log\log n$, the new algorithm
instead flips the edges in a random walk of length $c
\epsilon^{-1}$. The length of the random walk is parameterized so that
the number of potential volunteers $|D_t|$ is still $\log^c n$, which
allows for the algorithm to be analyzed as in the case of out-degree
$3$. The algorithm ensures that at most $O(\epsilon^{-1})$ edges are
flipped per edge-insertion, that each edge-insertion takes time
$O(\epsilon^{-1} \log n)$, and that the total time by all
edge-insertions is $O(n)$, with high probability in $n$.

\subsection{Achieving Constant Amortized Running Time} 

In Appendix \ref{sec:amortized} we modify the Dancing-Walk Algorithm to
achieve a total running time of $X = O(n)$, with high probability in
$n$. To simplify the discussion in this section, we focus here on the
simpler problem of bounding the \emph{expected} total running time
$\E[X]$.

\paragraph{Bounding the time taken by random walks}
Although each random walk is permitted to have length as large as
$\Theta(\log \log n)$, one can easily prove that a random walk through a tree of $m$ nodes expects to hit a node with out-degree less than $2$ within $O(\log m)$ steps. Recall that,
whenever an edge $e_t$ combines two (undirected) trees $T_1$ and
$T_2$, the ensuing random walks are performed in the
\emph{smaller} of $T_1$ or $T_2$. The expected contribution to the
running time $X$ is therefore, $O(\min(\log |T_1|, \log |T_2|))$. That is, even though a given edge-insertion $e_t$ could incur up to $\Theta(\log n)$ random walks each of length $\Theta(\log \log n)$ in the worst case, the expected time spent performing random walks is no more than $O(\min(\log |T_1|, \log |T_2|))$.

Let $\mathcal{T}$ denote the set of pairs $(T_1, T_2)$ that are
combined by each of the $n - 1$ edge insertions. A simple amortized
analysis shows that
\begin{equation}
  \sum_{(T_1, T_2) \in \mathcal{T}} \min\left(\log |T_1|, \log |T_2|\right)  = O(n).
  \label{eq:amortizedlogs}
 \end{equation}
Thus the time spent performing random walks is $O(n)$ in expectation.

\paragraph{The union-find bottleneck} In addition to performing random
walks, however, the algorithm must also compare $|T_1|$ and $|T_2|$ on each edge insertion.  But maintaining a union-find data
structure to store the sizes of the trees requires
$\Omega(\alpha(n, n))$ amortized time per operation
\cite{FredmanSa89}, where $\alpha(n, n)$ is the inverse Ackermann
function. 

Thus, for the algorithm described so far, the maintenance of a union-find data structure prevents an amortized constant running time per operation. We now describe how to modify the algorithm in order to remove this bottleneck.

\paragraph{Replacing size with combination rank} We modify the
Dancing-Walk Algorithm so that the algorithm no longer needs to keep
track of the size $|T|$ of each tree in the graph. Instead the
algorithm keeps track of the \defn{combination rank} $R(T)$ of each
tree $T$---whenever two trees $T_1$ and $T_2$ are combined by an edge
insertion, the new tree $T_3$ has combination rank,
  $$R(T_3) = \begin{cases}
    \max(R(T_1), R(T_2)) & \text{ if }R(T_1) \neq R(T_2) \\
    R(T_1) + 1           & \text{ if }R(T_1) = R(T_2).
  \end{cases}$$

  Define the \defn{Rank-Based Dancing-Walk Algorithm} to be the same as
  the Dancing-Walk Algorithm, except that the source vertex $s_t$ is
  selected to be in whichever of $T_1$ or $T_2$ has
  smaller combination rank (rather than smaller size).

\paragraph{The advantage of combination rank}
The advantage of combination rank is that it can be efficiently
maintained using a simple tree structure. Using this data structure,
the time to merge two trees $T_1$ and $T_2$ (running the Dancing-Walk Algorithm
with appropriately chosen source vertex)
becomes simply
$\min(R(T_1), R(T_2))$. This, in turn, can be upperbounded by
$O(\min(\log |T_1|, \log |T_2|))$. By \eqref{eq:amortizedlogs}, the
total time spent maintaining combination ranks of trees is $O(n)$.

The other important feature of combination rank is that it preserves
the properties of the algorithm that are used to analyze
correctness. Importantly, whenever a tree $T$ is used for
path augmentation by an edge-insertion $e_t$, the combination rank of
$T$ increases due to that edge insertion. One can further prove
that the combination rank never exceeds $O(\log n)$, which allows one
to derive both the Load Balancing Property and the Preset Children
Property.

\paragraph{The disadvantage: longer random walks}
The downside of using combination rank to select trees is that
\emph{random walks} can now form a running-time bottleneck.  Whereas
the expected running time of all random walks was previously bounded
by \eqref{eq:amortizedlogs}, we now claim that it is bounded by,
\begin{equation}
  \sum_{(T_1, T_2) \in \mathcal{T}}\left( \begin{cases}
  \log |T_1| & \text{ if } R(T_1) \le R(T_2) \\
  \log |T_2| & \text{ if } R(T_2) < R(T_1) 
\end{cases}\right) = O(n).
\label{eq:newrandomwalks}
\end{equation}

We now justify this claim.

The problem is that a tree $T$ can potentially have very small
combination rank (e.g., $O(1)$) but very large size (e.g.,
$\Omega(n)$). As a result, the summation \eqref{eq:amortizedlogs} 
may differ substantially from the summation \eqref{eq:newrandomwalks}.

Rather than bounding \eqref{eq:newrandomwalks} directly, we instead
examine the smaller quantity,
\begin{equation}
  \sum_{(T_1, T_2) \in \mathcal{T}} \left( \begin{cases}
  \log |T_1| - R(T_1)& \text{ if } R(T_1) \le R(T_2) \\
  \log |T_2| - R(T_2)& \text{ if } R(T_2) < R(T_1) 
\end{cases}\right) =  O(n).
\label{eq:newrandomwalks2}
\end{equation}
The difference between \eqref{eq:newrandomwalks} and \eqref{eq:newrandomwalks2} is simply
$$\sum_{(T_1, T_2) \in \mathcal{T}}  \min\left(R(T_1), R(T_2)\right) =  O(n),$$
meaning that an upper bound on \eqref{eq:newrandomwalks2} immediately
implies an upper bound on \eqref{eq:newrandomwalks}.

The key feature of \eqref{eq:newrandomwalks2}, however, is that it yields to a simple
potential-function based analysis. In particular, if we treat each vertex $v$ as initially having $\Theta(1)$ tokens, and we treat each 
tree combination $(T_1, T_2)$ as incurring a cost given by the summand in \eqref{eq:newrandomwalks2}, then one can show that every tree $T$ always has at least
$$\Omega\left( \frac{|T|}{2^{R(T)}} \right)$$
tokens, which means that the total number of tokens spent is $O(n)$. This allows us to bound \eqref{eq:newrandomwalks2} by
  $O(n)$, which then bounds \eqref{eq:newrandomwalks} by $O(n)$, and
  implies a total expected running time of $\E[X] =  O(n)$.




  \subsection{Dynamic Cuckoo Hashing: Transforming Static Guarantees into Dynamic Guarantees}
  In Appendix \ref{sec:cuckoo}, we apply our results on
  edge-orientation to the problem of maintaining a dynamic Cuckoo hash
  table. In particular, given any hash-function family $\mathcal{H}$
  that achieves a \emph{static guarantee} in a $1$-associative Cuckoo
  hash table, we show how to achieve strong \emph{dynamic guarantees}
  in an $O(1)$-associative table. We consider a wide variety of static
  guarantees, including those that use stashes
  \cite{aumuller2014explicit, kirsch2010more} or that make assumptions
  about the input \cite{mitzenmacher2008simple}. In this section, we
  give an overview of the main ideas need to achieve these results.

  \paragraph{From hash tables to graphs}
  Say that a set $X$ of records is \defn{$(h_1, h_2)$-viable} if it is
  possible to place the records $X$ into a $1$-associative $n$-bin
  Cuckoo hash table using hash functions $h_1$ and $h_2$.

  The property of being $(h_1, h_2)$-viable has a natural
  interpretation as a graph property. Define the \defn{Cuckoo graph
    $G(X, h)$} for a set of records $X$ and a for pair of hash
  functions $h = (h_1, h_2)$ to be the graph with vertices $[n]$ and
  with (undirected) edges $\{(h_1(x), h_2(x)) \mid x \in X\}$. The
  problem of configuring where records should go in the hash table
  corresponds to an edge-orientation problem in $G$. In particular,
  one can think of each record $x$ that resides in a bin $h_i(x)$ as
  representing an edge $(h_1(x), h_2(x))$ that is oriented to face out
  of vertex $h_i(x)$. A set of records $X$ is $h$-viable if and only
  if the edges in $G$ can be oriented to so that the maximum
  out-degree is $1$.

  The fact that $G(X, h)$ can be oriented with maximum out-degree $1$
  means that $G$ is a \defn{pseudoforest} --- that is, each connected
  component of $G$ is a tree with up to one extra edge. For the sake
  of simplicity here, we will make the stronger assumption that $G$
  forms a forest; this assumption can easily be removed in any of a
  number of ways, including by simply identifying and treating
  specially any extra edges.

  \paragraph{From incremental Cuckoo hashing to incremental edge-orientation}
  If we assume that the edges in the Cuckoo graph form a forest, then
  the problem of implementing insertions in a Cuckoo hash table (for
  now we ignore deletions) is \emph{exactly} the incremental
  edge-orientation problem studied in this paper. In particular, the
  problem of finding a kickout chain corresponds exactly to the
  problem of selecting a path of edges to augment. Thus we can use the
  Rank-Based Dancing-Walk Algorithm in order to achieve a dynamic
  guarantee.

  \paragraph{Supporting deletions with phased rebuilds}
  Although our results on edge-orientation support only edge
  insertions, we wish to support both insertions an deletions in our
  hash table.

  To support deletions, we first modify the data structure so that it
  is gradually rebuilt from scratch every $\epsilon n$ insert/delete
  operations, for some $\epsilon \in (0, 1)$. By doubling the size of
  each bin, we show that these rebuilds can be performed without
  interfering with queries or inducing any high-latency
  operations. The effect of the these rebuilds is that we can analyze
  the table in independent batches of $\epsilon n$ operations.

  Consider a batch of $\epsilon n$ operations, and let $X$ denote the
  set of all records that are present during \emph{any} of those
  operations. Note that $|X|$ may be as large as $(c + \epsilon)n$
  where $cn$ is the capacity of the table. By setting
  $(c + \epsilon)n$ (rather than $cn$) to be the capacity at which
  $\mathcal{H}$ offers static guarantees, we can apply the static
  guarantee for $\mathcal{H}$ to \emph{all} of the records in $X$
  simultaneously, even though the records $X$ are not necessarily ever
  logically in the table at the same time as each other. The fact that
  $X$ is $h$-viable in its entirety (including records that are
  deleted during the batch of operations!) allows for us to analyze
  deletions without any trouble.

\bibliography{all,references}
\appendix


\section{An Algorithm with High-Probability Worst-Case Guarantees}
\label{sec:alg}

This section considers the problem of incremental edge orientation in
a forest. Let $e_1, \ldots, e_{n - 1}$ be a sequence of edges between
vertices in $V=\{v_1, \ldots, v_n\}$ such that the edges form a tree
on the vertices. 

We now present the Dancing-Walk Algorithm. The Dancing-Walk Algorithm
guarantees out-degree at most $3$ for each vertex, and performs at
most $O(\log \log n)$ edge-flips per operation. Each step of the
algorithm takes time at most $O(\log n \log \log n)$ to process. The
algorithm is randomized, and can sometimes declare failure. The main
technical difficulty in analyzing the algorithm is to show that the
probability of the algorithm declaring failure is always very small.

\paragraph{The Dancing-Walk Algorithm}
At any given moment, the algorithm allows each vertex $v$ to have up
to two \defn{primary} out-going edges, and one \defn{secondary}
out-going edge. A key idea in the design of the algorithm is that,
once a vertex has two primary out-going edges, the vertex can
\defn{volunteer} to take on a secondary out-going edge in order to
ensure that a chain of edge flips remains short. But if vertices
volunteer too frequently in some part of the graph, then the supply of
potential volunteers will dwindle, which would destroy the algorithm's
performance. The key is to design the algorithm in a way so that
volunteering vertices are able to be useful but are not overused.

Consider the arrival of a new edge $e_i$. Let $v_1$ and $v_2$ be the
two vertices that $e_i$ connects, and let $T_1$ and $T_2$ be the two
trees rooted at $v_1$ and $v_2$, respectively. The algorithm first
determines which of $T_1$ or $T_2$ is smaller (for this description we
will assume $|T_1| \le |T_2|$). Note that, by maintaining a simple
union-find data structure on the nodes, the algorithm can recover the
sizes of $T_1$ and $T_2$ each in $O(\log n)$ time.

The algorithm then performs a random walk through the (primary)
directed edges of $T_1$, beginning at $v_1$. Each step of the random
walk travels to a new vertex by going down a random outgoing primary
edge from the current vertex. If the random walk encounters a vertex
$u$ with out-degree less than $2$ (note that this vertex $u$ may even
be $v_1$), then the walk terminates at that vertex. Otherwise, the
random walk continues for a total of $c \log \log n$ steps,
terminating at some vertex $u$ with out-degree either $2$ or $3$. If
the final vertex $u$ has out-degree $2$, meaning that the vertex does
not yet have a secondary out-going edge, then the vertex $u$
volunteers to take a secondary out-going edge and have its out-degree
incremented to $3$. If, on the other hand, the final vertex $u$
already has out-degree $3$, then the random walk is considered to have
\defn{failed}, and the random walk is repeatedly restarted from
scratch until it succeeds. The algorithm performs up to $d \log n$
random-walk attempts for some sufficiently large constant $d$; if all
of these fail, then the algorithm \defn{declares failure}.

Once a successful random walk is performed, all of the edges that the
random walk traveled down to get from $v_1$ to $u$ are flipped. This
decrements the degree of $v_1$ and increments the degree of $u$. The
edge $e_i$ is then oriented to be out-going from $v_1$. The result is
that every vertex in the graph except for $u$ has unchanged
out-degree, and that $u$ has its out-degree incremented by $1$.

\paragraph{Analyzing the Dancing-Walk Algorithm}
In the rest of the section, we prove the following theorem:
\begin{theorem}
  With high probability in $n$, the Dancing-Walk Algorithm can process
  all of $e_1, \ldots, e_{n - 1}$ without declaring failure. If the
  algorithm does not declare failure, then each step flips
  $O(\log \log n)$ edges and takes $O(\log n \log \log n)$
  time. Additionally, no vertex's out-degree ever exceeds $3$.
  
  \label{thm:failure_prob}
\end{theorem}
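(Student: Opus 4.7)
The plan is to establish the three claimed bounds (out-degree, edge flips, time) by reducing everything to a single event: that each random walk of length $c \log \log n$ succeeds (i.e., reaches a vertex that is either alive or under out-degree $3$) with constant probability. If this holds at every insertion, then out of $d\log n$ attempts the probability that all fail is $2^{-\Omega(\log n)}$, which union-bounded over the $n-1$ insertions gives high probability of no failure. Conditional on no failure, each insertion does at most $d\log n$ random walks each of length $c\log\log n$, so at most $O(\log n \log\log n)$ total work is done, and only one walk's edges ($O(\log\log n)$ edges) are actually flipped. The out-degree bound is immediate from the algorithm: we never flip to a vertex of out-degree $3$ except the final volunteer, who moves from $2$ to $3$.

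The real content is showing that, at any insertion $t$, at least a constant fraction of the candidate-volunteer set $D_t$ (the length-$c\log\log n$ reachable set from $s_t$) consists of vertices that have not yet volunteered. First I would formalize the \textbf{Load Balancing Property}: because we route every walk into the smaller of the two trees being merged, a vertex $v$ is in $D_t$ only if $v$'s tree size at least doubles at step $t$, so $v$ lies in at most $\lceil \log n \rceil$ candidate sets. Second, the \textbf{Sparsity Property}: because every step of the walk chooses uniformly between exactly two outgoing primary edges (the footnote ensures this), any fixed length-$c\log\log n$ path ending at $v$ is taken with probability at most $\log^{-c} n$ per attempt, so at most $O(\log^{-(c-1)} n)$ per insertion; combined with load balancing, each vertex volunteers ever with probability at most $O(\log^{-(c-2)} n)$. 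These two properties together would yield the desired concentration if the $D_t$'s were data-independent sets, via a standard moment-generating-function argument for the sum of the indicator variables ``$v$ ever volunteers'' — even without independence, pairwise (and higher) negative or bounded correlations give the Chernoff-style tail bound $\log^{-\Omega(ck)} n$ on the probability that more than half of any fixed $k$-set has volunteered.

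The main obstacle is that $D_t$ is \emph{not} a fixed set; it is chosen by an adaptive adversary (the algorithm's own past randomness). To defeat this, I would prove the \textbf{Preset-Children Property}: for each vertex $v$ there is a deterministic set $C_v$ of size $O(\log n)$ such that every child $v$ ever has in any execution lies in $C_v$. The proof is that $v$ acquires a new child only when a walk passes through $v$ from some preceding vertex $u$ on the path from $s_t$ to $v$; since being reachable from $s_t$ requires $v$'s tree to at least double, there are only $O(\log n)$ insertions during which $v$ could gain any new child, and each such insertion contributes at most a constant number of candidate predecessors (bounded by the deterministic structure of the trees at that moment, which depends only on the edge sequence, not the random choices — here one has to argue carefully that the \emph{identity} of the potential new child is forced by the edge sequence alone up to a small set). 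Granting this, the number of distinct possible depth-$c\log\log n$ rooted search trees out of $s_t$ is at most $\log^{O(\log^c n)} n$, giving $|\mathcal U_t| \le \log^{O(\log^c n)} n$.

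Finally I would take a union bound: over all $|\mathcal U_t|$ candidate sets $S$, over the bad event that more than half of $S$ has volunteered. By the concentration bound, this is at most
\[
 |\mathcal U_t| \cdot \log^{-\Omega(c \log^c n)} n \;\le\; \log^{O(\log^c n) - \Omega(c \log^c n)} n,
\]
which for $c$ a sufficiently large constant is $n^{-\omega(1)}$. Union-bounding over $t \in [n-1]$ still leaves failure probability $n^{-\omega(1)}$. Combined with the conditional bounds on flips and running time stated at the outset, this proves \thmref{failure_prob}. The subtlest step by far is the Preset-Children Property — making precise why the universe of potential children of $v$ is determined by the (deterministic) edge sequence and not blown up by the algorithm's randomness — and I expect to spend most of the proof pinning down that invariant.
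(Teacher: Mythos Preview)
Your proposal is correct and follows essentially the same architecture as the paper: Load Balancing plus Sparsity give a Chernoff-type tail on any fixed candidate set, the Preset-Children Property bounds $|\mathcal{U}_t|$ by $\log^{O(\log^c n)} n$, and a union bound finishes. The point you flag as subtlest is resolved exactly as you suspect---the potential new child of $v$ at step $t$ is the unique neighbor of $v$ on the (undirected) tree path from $s_t$ to $v$, so it is one vertex, not merely $O(1)$, and it is determined by the edge sequence alone; combined with the (deterministic) set of at most $\log n$ steps at which $v$'s component is the smaller side, this gives $|C_v|\le \log n$.
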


For each edge $e_t$, let $B_t$ be the binary tree in which the random
walks are performed during the operation in which $e_t$ is
inserted. In particular, for each internal node of $B_t$, its children
are the vertices reachable by primary out-going edges; all of the
leaves in $B_t$ are either at depth $c \log \log n$, or are at smaller
depth and correspond with a vertex that has out-degree one or
zero. Note that the set of nodes that make up $B_t$ is a function of
the random decisions made by the algorithm in previous steps, since
these decisions determine the orientations of edges. Call the leaves
at depth $(c \log \log n)$ in $B_t$ the \defn{potential volunteer
  leaves}. If every leaf in $B_t$ is a potential volunteer leaf, then
$B_t$ can have as many as $(\log n)^c$ such leaves.

The key to proving Theorem \ref{thm:failure_prob} is to show with high
probability in $n$, that for each step $t$, the number of potential
volunteer leaves in $B_t$ that have already volunteered in previous
steps is at most $(\log n)^c / 2$.

\begin{proposition}
  Consider a step $t \in \{1, 2, \ldots, n - 1\}$. With high probability
  in $n$, the number of potential volunteer leaves in $B_t$ that have
  already volunteered in previous steps is at most $(\log n)^c / 2$.
  \label{prop:manyvolunteers}
\end{proposition}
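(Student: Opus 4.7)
The plan is to union-bound over the universe $\mathcal{U}_t$ of all possible realizations of the set $D_t$ of potential volunteer leaves of $B_t$. The main obstacle is that $D_t$ is not a fixed set, but rather a function of the algorithm's past random choices, so one cannot directly Chernoff-bound the number of already-volunteered vertices in $D_t$. The strategy is to bound two things separately: (i) the probability, for a fixed set $S$ of vertices, that too many of them have ever volunteered, and (ii) the number of possible realizations of $D_t$.

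For (i), I would first establish the \emph{Sparsity Property}: in any single step, each vertex $v$ has probability at most $O(1/\log^c n)$ of being chosen as the volunteer by one random walk (since the walk picks uniformly among two outgoing primary edges at each of $c\log\log n$ steps, giving $\log^c n$ equally likely length-$(c\log\log n)$ paths from the source), and since at most $O(\log n)$ walks are attempted per step, $v$ volunteers in that step with probability at most $O(1/\log^{c-1} n)$. Combined with the \emph{Load Balancing Property} -- that $v$ can lie in at most $\lceil \log n \rceil$ sets $D_{t'}$, because being in $D_{t'}$ requires $v$ to sit in the smaller of the two trees merged by $e_{t'}$, which at least doubles the size of the tree containing $v$ -- this yields that each vertex has probability at most $O(1/\log^{c-2} n)$ of \emph{ever} volunteering. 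Then, for a fixed set $S$ of size $k$, I would bound the probability that more than $k/2$ members of $S$ have volunteered by inspecting the moment generating function of the count: the adaptive dependence between volunteering events can be handled by conditioning sequentially on the ``selection history'' and observing that each per-step, per-vertex contribution is stochastically dominated by an independent Bernoulli of parameter $O(1/\log^{c-1} n)$; summing over the $O(\log n)$ relevant steps and exponentiating gives the bound $1/\log^{\Omega(ck)} n$ claimed in \eqref{eq:rough_chernoff}.

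For (ii), I would invoke the \emph{Preset-Children Property}: for each vertex $v$ there is a deterministic set $C_v$ of size $O(\log n)$ containing every vertex that could ever be a (primary or secondary) child of $v$ under any execution of the algorithm. This follows because the only way $u$ becomes a child of $v$ during step $t'$ is if $u$ is the vertex immediately preceding $v$ on the path from $s_{t'}$ to $v$ in $T_1$; by the same doubling argument used for the Load Balancing Property, there are only $O(\log n)$ values of $t'$ for which such a path even exists. Since $B_t$ has $O(\log^c n)$ nodes and each node has $O(1)$ children chosen from its deterministic pool of size $O(\log n)$, the total number of possible trees $B_t$, and hence $|\mathcal{U}_t|$, is at most $\log^{O(\log^c n)} n$.

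Combining (i) applied to each $S \in \mathcal{U}_t$ with $k = |S| = \Theta(\log^c n)$, and union-bounding using the bound from (ii), the probability of failure in step $t$ is at most
\[
\frac{\log^{O(\log^c n)} n}{\log^{\Omega(c \log^c n)} n},
\]
which is $1/\poly(n)$ for $c$ a sufficiently large constant. The hard part will be making the moment-generating-function argument of (i) rigorous in the presence of adaptive dependence; everything else is an essentially deterministic counting exercise built on the preset-children invariant, which itself must be verified by a careful induction on the sequence of edge insertions showing that the only candidates ever added to $C_v$ are those forced by the path structure at some earlier insertion.
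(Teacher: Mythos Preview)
Your proposal is correct and follows essentially the same approach as the paper: bound the universe of possible trees $B_t$ via the Preset-Children Property (the paper's Lemma~\ref{lem:bounding_subtree_options}), bound the probability that a fixed set of vertices has too many volunteers via a moment-generating-function argument that handles the adaptive dependence (the paper's Lemma~\ref{lem:bounding_volunteers} and Claim~\ref{clm:chernoffbound}), and then union-bound. The only cosmetic difference is that the paper, rather than directly bounding $\Pr[\text{more than }k/2\text{ of }S\text{ have volunteered}]$, instead union-bounds over all size-$k/2$ subsets of $S$ and bounds the probability that \emph{all} of a fixed subset have volunteered; this is equivalent to your formulation up to the extra $2^{\log^c n}$ factor from the binomial coefficient, which is absorbed in the same way.
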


Assuming the high-probability outcome in Proposition
\ref{prop:manyvolunteers}, it follows that each random walk performed
during the $t$-th operation has at least a $1/2$ chance of success. In
particular, the only way that a random walk can fail is if it
terminates at a leaf of depth $c \log \log n$ and that leaf has
already volunteered in the past. With high probability in $n$,
one of the first $O(\log n)$ random-walk attempts will succeed,
preventing the algorithm from declaring failure.

The intuition behind Proposition \ref{prop:manyvolunteers} stems from
two observations:
\begin{itemize}
\item \textbf{The Load Balancing Property: }Each vertex $v$ is
  contained in at most $\log n$ trees $B_t$. This is because, whenever
  two trees $T_1$ and $T_2$ are joined by an edge $e_t$, the tree
  $B_t$ is defined to be in the smaller of $T_1$ or $T_2$. In other
  words, for each step $t$ that a vertex $v$ appears in $B_t$,
  the size of the (undirected) tree containing $v$ at least doubles.
\item \textbf{The Sparsity Property: } During a step $t$, each
  potential volunteer leaf in $B_t$ has probability at most
  $\frac{d \log n}{\log^c n}$ of being selected to volunteer.
\end{itemize}
Assuming that most steps succeed within the first few random-walk
attempts, the two observations combine to imply that most vertices $v$
are never selected to volunteer.

The key technical difficulty comes from the fact that the structure of
the tree $B_t$, as well as the set of vertices that make up the tree,
is partially a function of the random decisions made by the algorithm
in previous steps. This means that the set of vertices in tree $B_t$
can be partially determined by which vertices have or have not
volunteered so far. In this worst case, this might result in $B_t$
consisting entirely of volunteered vertices, despite the fact that the
vast majority of vertices in the graph have not volunteered yet.

How much flexibility is there in the structure of $B_t$? One
constraint on $B_t$ is that it must form a subtree of the undirected
graph $G_t = \{e_1, \ldots, e_{t - 1}\}$. This constraint alone is not
very useful. For example, if $G_t$ is a $(\log^{c + 1} n)$-ary 
tree of depth $c \log \log n$, and if each node in $G_t$ has
volunteered previously with probability $1/\log^c n$, then there is a
reasonably high probability that every internal node of $G_t$ contains
at least two children that have already volunteered. Thus there would
exist a binary subtree of $G_t$ consisting entirely of nodes that have
already volunteered.

An important property of the Dancing-Walk Algorithm is that the tree
$B_t$ cannot, in general, form an arbitrary subtree of $G_t$. Lemma
\ref{lem:bounding_subtree_options} bounds the total number of
possibilities for $B_t$:
\begin{lemma}
  For a given sequence of edge arrivals $e_1, \ldots, e_{n - 1}$, the number
  of possibilities for tree $B_t$ is at most
  $$(\log n)^{2 \log^c n}.$$
  \label{lem:bounding_subtree_options}
\end{lemma}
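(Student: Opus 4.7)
I would establish the Preset-Children Property sketched in the technical overview: for every vertex $v$, there is a deterministic set $C_v \subseteq V$ of size $O(\log n)$, depending only on the arrival sequence $e_1, \ldots, e_{n-1}$ and not on the algorithm's random bits, such that every vertex that is ever an outgoing neighbor of $v$ (under any possible execution of the algorithm) lies in $C_v$. The bound on $B_t$ then follows from a routine product-counting argument.

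To bound $|C_v|$, I would enumerate the events at which $v$ can acquire a new outgoing neighbor. At step $t$, there are only two such events: (a) $v = s_t$, in which case $v$ gains the other endpoint of $e_t$ as a child; or (b) $v \neq s_t$ and $v$ lies on the augmenting path flipped by the algorithm, in which case $v$ gains its predecessor on that path as a new child. In both cases $v$ must belong to the smaller of the two trees being joined at step $t$; by the Load Balancing Property, this occurs for at most $\log n$ values of $t$, since each such event at least doubles the size of $v$'s connected component. Next, I would argue that at each such $t$ the potential new child of $v$ is a deterministic function of $e_1, \ldots, e_t$: in case (a) it is simply the other endpoint of $e_t$, and in case (b) it is the unique neighbor of $v$ on the undirected path from $s_t$ to $v$ in the current forest. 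The uniqueness of the undirected path is the crucial fact here: any directed augmenting path from $s_t$ to $v$ in a forest must traverse the unique undirected $s_t$-to-$v$ path, no matter what orientations the algorithm's randomness has produced. Summing over the at most $\log n$ relevant values of $t$ yields $|C_v| \le O(\log n)$.

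Given the Preset-Children Property, the counting of possibilities for $B_t$ is straightforward. The root $s_t$ is deterministic, since which endpoint of $e_t$ lies in the smaller tree depends only on the arrival sequence (with a fixed tie-breaking rule). Since $B_t$ is a binary tree of depth at most $c \log \log n$, it contains at most $2 \log^c n$ vertices. For each vertex $v$ of $B_t$, the at-most-two children of $v$ inside $B_t$ must come from $C_v$, giving at most $\binom{|C_v|}{2} + |C_v| + 1 = O(\log^2 n)$ choices per vertex. Multiplying over all vertices of $B_t$ yields at most $(\log n)^{O(\log^c n)}$ possibilities, which is subsumed by the stated bound $(\log n)^{2 \log^c n}$ after absorbing constants into $c$.

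The main obstacle is the determinism argument above: carefully disentangling the randomness-dependent orientations of edges at time $t$ from the deterministic identity of the neighbor of $v$ that could become $v$'s new child. Although the orientation of edges at time $t$ depends on all prior random decisions, the \emph{identity} of the unique candidate predecessor of $v$ on a directed path from $s_t$ to $v$ is forced by the underlying undirected forest structure, which is itself a deterministic function of $e_1, \ldots, e_t$. Once this is formalized, what remains is merely a product of $O(\log^c n)$ factors each of size $\log^{O(1)} n$.
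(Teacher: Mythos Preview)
Your proposal is correct and follows essentially the same approach as the paper's proof. The paper formulates the key step as the \emph{Strong Load Balancing Property} (the set of steps $t$ for which $v$ can appear in $B_t$ is a deterministic set of size at most $\log n$), then observes that at each such step the unique path from $s_t$ to $v$ in the undirected forest fixes the single candidate for a new outgoing edge at $v$; your Preset-Children argument is exactly this, and the subsequent product-counting over the at most $\log^c n$ internal nodes of $B_t$ matches the paper's computation of $(\log^2 n)^{\log^c n} = (\log n)^{2\log^c n}$.
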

\begin{proof}
  We will show that, for a given node $v$ in $B_t$, there are only
  $\log n$ options for who each of $v$'s children can be in $B_t$. In
  other words, $B_t$ is a binary sub-tree of a $(\log n)$-ary tree
  with depth $c \log \log n$. Once this is shown, the lemma can be
  proven as follows. One can construct all of the possibilities for
  $B_t$ by beginning with the root node $v_1$ and iteratively by
  adding one node at a time from the top down. Whenever a node $v$ is
  added, and is at depth less than $c \log \log n$, one gets to either
  decide that the node is a leaf, or to select two children for the
  node. It follows that for each such node $v$ there are at most
  $\binom{\log n}{2} + 1 \le \log^2 n$ options for what $v$'s set of
  children looks like. Because $B_t$ can contain at most
  $\log^c n - 1$ nodes $v$ with depths less than $c \log \log n$, the
  total number of options for $B_t$ is at most
  $\left(\log^{2} n\right)^{\log^c n}$, as stated by the lemma.

  It remains to bound the number of viable children for each node $v$
  in $B_t$. To do this, we require a stronger version of the load
  balancing property. The Strong Load Balancing Property says that,
  not only is the number of trees $B_t$ that contain $v$ bounded by
  $\log n$, but the set of $\log n$ trees $B_t$ that can contain $v$ is a
  function only of the edge sequence $(e_1, \ldots, e_{n - 1})$, and not of the
  randomness in the algorithm. 
  \begin{itemize}
  \item \textbf{The Strong Load Balancing Property: }For each vertex
    $v$, there is a set $S_v \subseteq [n]$ determined by the
    edge-sequence $(e_1, \ldots, e_{n - 1})$ such that: (1) the set's
    size satisfies $|S_v| \le \log n$, and (2) every $B_t$ containing $v$
    satisfies $t \in S_v$.
  \end{itemize}
  The Strong Load Balancing Property is a consequence of the fact
  that, whenever a new edge $e_t$ combines two trees $T_1$ and $T_2$,
  the algorithm focuses only on the smaller of the two trees. It
  follows that a vertex $v$ can only be contained in tree $B_t$ if the
  size of the (undirected) tree containing $v$ at least doubles during
  the $t$-th step of the algorithm. For each vertex $v$, there can only
  be $\log n$ steps $t$ in which the tree size containing $v$ doubles,
  which implies the Strong Load Balancing Property.

  Consider a step $t$, and suppose that step $t$ orients some edge $e$
  to be facing out from some vertex $v$. Then it must be that the path
  from edge $e_t$ to vertex $v$ goes through $e$ as its final edge. In
  other words, for a given step $t$ and a given vertex $v$, there is
  only one possible edge $e$ that might be reoriented during step $t$
  to be facing out from $v$. By the Strong Load Balancing Property, it
  follows that for a given vertex $v$, there are only $\log n$
  possibilities for out-going edges $e$. This completes the proof of
  the lemma.  
\end{proof}

Now that we have a bound on the number of options for $B_t$, the next
challenge is to bound the probability that a given option for $B_t$
has an unacceptably large number of volunteered leaves.

The next lemma proves a concentration bound on the number of
volunteered vertices in a given set. Note that the event of
volunteering is not independent between vertices. For example, if two
vertices $v$ and $u$ are potential volunteer leaves during some step,
then only one of $v$ or $u$ can be selected to volunteer during that
step.

\begin{lemma}
  Fix a sequence of edge arrivals $e_1, \ldots, e_{n - 1}$, and a set
  $S$ of vertices. The probability that every vertex in $S$ volunteers
  by the end of the algorithm is at most,
  $$O\left(\frac{1}{\log^{(c - 3)|S|} n}\right).$$
  \label{lem:bounding_volunteers}
\end{lemma}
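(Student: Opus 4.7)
The plan is to apply a union bound over all possible \emph{schedules} $\phi : S \to \{1,\ldots,n-1\}$ assigning each vertex $v \in S$ to the step at which it volunteers, and bound the probability of each schedule via the chain rule together with the Sparsity Property. If every $v \in S$ volunteers then such a $\phi$ exists, and by the Strong Load Balancing Property each value $\phi(v)$ must lie in the set $S_v$ of size at most $\log n$ determined by the edge sequence, so there are at most $(\log n)^{|S|}$ candidate schedules.

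For a fixed $\phi$, I would order $S$ as $v_1, \ldots, v_{|S|}$ by increasing $\phi(v_i)$ and write
\begin{equation*}
  \Pr[\phi \text{ realized}] = \prod_{i=1}^{|S|} \Pr\bigl[v_i \text{ volunteers in step } \phi(v_i) \,\bigm|\, v_j \text{ volunteered in step } \phi(v_j) \text{ for } j < i\bigr].
\end{equation*}
The step I expect to be the main obstacle is verifying a conditional version of the Sparsity Property: for any history up to the start of step $\phi(v_i)$, the probability that $v_i$ is volunteered during step $\phi(v_i)$ is still at most $d \log n / \log^c n$. This should hold because conditioning on the history pins down the current graph orientation (and hence the tree $B_{\phi(v_i)}$), while the up to $d \log n$ fresh random walks of length $c \log \log n$ performed during step $\phi(v_i)$ each reach a fixed potential-volunteer leaf with probability at most $1/\log^c n$ by the binary branching of $B_t$; a union bound across walks then gives the required conditional factor. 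A small but worth-mentioning point is that the $\phi(v_i)$ can be taken distinct, since only one vertex volunteers per step, so conditioning on earlier factors does not collide with the randomness driving step $\phi(v_i)$.

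Granting that bound, each factor in the product is at most $d \log n / \log^c n$, so $\Pr[\phi \text{ realized}] \le (d \log n / \log^c n)^{|S|}$, and summing over the $(\log n)^{|S|}$ candidate schedules yields
\begin{equation*}
  \Pr[\text{all } v \in S \text{ volunteer}] \le (\log n)^{|S|} \cdot \Bigl(\frac{d \log n}{\log^c n}\Bigr)^{|S|} = \frac{d^{|S|}}{\log^{(c-2)|S|} n} = O\Bigl(\frac{1}{\log^{(c-3)|S|} n}\Bigr),
\end{equation*}
where the final simplification absorbs $d^{|S|} \le (\log n)^{|S|}$ (valid once $\log n \ge d$) into the $O(\cdot)$. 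This matches the stated bound.
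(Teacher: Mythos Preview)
Your argument is correct and takes a genuinely different route from the paper. The paper does not enumerate schedules at all; instead it sets $X_t$ to be the indicator that \emph{some} element of $S$ volunteers during step $t$, observes (via the ordinary Load Balancing Property) that the adaptive success probabilities $p_t$ satisfy the deterministic budget $\sum_t p_t \le |S|\, d\log^2 n / \log^c n$, and then invokes a Chernoff-type bound for adaptive coin flips (their Claim~\ref{clm:chernoffbound}, proved by analyzing the moment generating function) to bound $\Pr[\sum_t X_t \ge |S|]$.

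Your approach is more elementary: the union bound over schedules together with the chain rule replaces the moment-generating-function machinery entirely, at the price of needing the \emph{Strong} Load Balancing Property (the sets $S_v$ are determined by the edge sequence alone) to control the number of schedules. That property is already established in the proof of Lemma~\ref{lem:bounding_subtree_options}, so invoking it here is legitimate. Your handling of the conditional Sparsity bound is also sound: since the events $\{v_j \text{ volunteers in step } \phi(v_j)\}_{j<i}$ are measurable with respect to the history before step $\phi(v_i)$, and since for every fixed such history the fresh random walks of step $\phi(v_i)$ hit any specific potential-volunteer leaf with probability at most $d\log n/\log^c n$, the conditional factor is bounded as you claim. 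The paper's approach has the mild advantage of using only the weaker Load Balancing Property and of packaging the adaptivity into a reusable tail-bound claim; your approach has the advantage of avoiding that claim altogether and making the combinatorics explicit.
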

\begin{proof}


  For each step $t \in \{1, 2, \ldots, n - 1\}$, define $F_t$ to be the
  number of elements of $S$ that are potential volunteer leaves during
  step $t$. Define
  $$p_t = \frac{F_t \cdot d \log n}{\log^c n},$$
  where $d \log n$ is the number of random-walk attempts that the
  algorithm is able to perform in each step before declaring failure. By the
  Sparsity Property, the value $p_t$ is an upper bound for the
  probability that any of the elements of $S$ volunteer during step
  $t$. In other words, at the beginning of step $t$, before any
  random-walk attempts are performed, the probability that some
  element of $S$ volunteers during step $t$ is at most $p_t$.

  Note that the values of $p_1, \ldots, p_{n - 1}$ are not known at the
  beginning of the algorithm. Instead, the value of $p_t$ is partially
  a function of the random decisions made by the algorithm in steps
  $1, 2, \ldots, t - 1$. The sum $\sum_t p_t$ is deterministically
  bounded, however. In particular, since each vertex $s \in S$ can
  appear as a potential volunteer leaf in at most $\log n$ steps (by
  the Load Balancing Property), the vertex $s$ can contribute at most
  $d \log^2 n$ to the sum $\sum_t p_t$. It follows that
  $$\sum_t p_t \le \frac{|S| d \log^2 n}{\log^c n}.$$

  Let $X_t$ be the indicator random variable for the event that some
  vertex in $S$ volunteers during step $t$. Each $X_t$
  occurs with probability at most $p_t$. The events $X_t$ are not
  independent, however, since the value of $p_t$ is not known until
  the end of step $t - 1$. Nonetheless, the fact that $\sum_t p_t$ is
  bounded allows for us to prove a concentration bound on $\sum_t X_t$
  using the following claim.

  \begin{claim}
    Let $\mu \in [0, n]$, and suppose that Alice is allowed to select
    a sequence of numbers $p_1, p_2, \ldots, p_k$, $p_i \in [0, 1]$,
    such that $\sum_i p_i \le \mu$. Each time Alice selects a number
    $p_i$, she wins $1$ dollar with probability $p_i$. Alice is an
    adaptive adversary in that she can take into account the results
    of the first $i$ bets when deciding on $p_{i + 1}$. If $X$ is
    Alice's profit from the game,
  $$\Pr\Big[X > (1 + \delta) \mu\Big] \le \exp\left((\delta - \ln(1 +
    \delta) (1 + \delta))\mu\right),$$
  for all $\delta  > 0$.
    \label{clm:chernoffbound}
  \end{claim}

  The proof of Claim \ref{clm:chernoffbound} follows by inspection of
  the moment generating function for $X$, and is deferred to Appendix
  \ref{app:chernoffbound}.

  Applying Claim \ref{clm:chernoffbound} to $X = \sum_t X_t$, with
  $\delta = \frac{\log^c n}{d \log^2 n} - 1$ and
  $\mu = \frac{|S| d \log^2 n}{\log^c n}$ (so that
  $(\delta + 1) \mu = |S|$), we get that
  \begin{align*}
    \Pr[X > |S|] & \le \exp\left(|S| - |S| \ln \frac{\log^c n}{d \log^2 n}\right) \\
                 & = O\left(\exp\left(-|S| \ln \log^{c - 3} n\right)\right) \\
                 & = O\left(\log^{-(c - 3)|S|} n\right).
  \end{align*}
  
\end{proof}

Combining Lemmas \ref{lem:bounding_subtree_options} and
\ref{lem:bounding_volunteers}, we can now prove Proposition
\ref{prop:manyvolunteers}.

\begin{proof}[Proof of Proposition \ref{prop:manyvolunteers}]
  Consider a tree $B_t$. By Lemma \ref{lem:bounding_subtree_options},
  the number of options for $B_t$, depending on the behavior of the
  algorithm in steps $1, 2, \ldots, t - 1$, is at most,
  $$(\log n)^{2 \log^c n}.$$
  For a given choice of $B_t$, there are at most
  $\binom{\log^c n}{\frac{1}{2} \log^c n} \le 2^{\log^c n}$ ways to
  choose a subset $S$ consisting of $\frac{\log^c n}{2}$ of the
  potential volunteer leaves. For each such set of leaves $S$, Lemma
  \ref{lem:bounding_volunteers} bounds the probability that all of the
  leaves in $S$ have already volunteered by,
  $$O\left(\log^{-(c - 3)|S|} n\right) = O\left(\log^{-(\log^c n)(c - 3)/2} n\right).$$
  Summing this probability over all such subsets $S$ of all
  possibilities for $B_t$, the probability that $B_t$ contains
  $\frac{\log^c n}{2}$ already-volunteered leaves is at most,
  \begin{align*}
    &O\left((\log n)^{2 \log^c n} \cdot 2^{\log^c n} \cdot \log^{-(\log^c n)(c - 3)/2} n\right) \\
    & = O\left(\frac{(2\log n)^{2 \log^c n}}{\log^{(\log^c n)(c - 3)/2} n}\right). \\
  \end{align*}
  For a sufficiently large constant $c$, this is at most
  $\frac{1}{n^{\omega(1)}}$.  The proposition follows by taking a
  union bound over all $t \in \{1, 2, \ldots, n - 1\}$.
\end{proof}

We conclude the section with a proof of Theorem \ref{thm:failure_prob}
\begin{proof}[Proof of Theorem \ref{thm:failure_prob}]
  Consider a step $t$ in which the number of potential volunteer
  leaves in $B_t$ that have already volunteered is at most
  $\frac{1}{2} \log^c n$. The only way that a random walk in step $t$
  can fail is if it lasts for $c \log \log n$ steps (without hitting a
  vertex with out-degree $1$ or $0$) and it finishes at a vertex that
  has already volunteered. It follows that, out of the $\log^c n$
  possibilities for a $(c\log\log n)$-step random walk, at most half
  of them can result in failure. Since each random-walk attempt
  succeeds with probability at least $1/2$, and since the algorithm
  performs up to $d \log n$ attempts for a large constant $d$, the
  probability that the algorithm fails on step $t$ is at most
  $\frac{1}{n^d} = \frac{1}{\poly n}$.

  The above paragraph establishes that, whenever the search tree $B_t$
  contains at most $\frac{1}{2}\log^c n$ potential volunteer leaves
  that have already volunteered, then step $t$ will succeed with high
  probability in $n$. It follows by Proposition
  \ref{prop:manyvolunteers} that every step succeeds with high
  probability in $n$.

  We complete the theorem by discussing the properties of the
  algorithm in the event that it does not declare failure. Each step
  flips at most $O(\log \log n)$ edges and maintains maximum
  out-degrees of $3$. Because each step performs at most $O(\log n)$
  random-walk attempts, these attempts take time at most
  $O(\log n \log \log n)$ in each step. Additionally, a union-find
  data structure is used in order to allow for the sizes $|T_1|$ and
  $|T_2|$ of the two trees being combined to be efficiently computed
  in each step. Because the union-find data structure can be
  implemented to have worst-case operation time $O(\log n)$, the
  running time of each edge-insertion remains at most
  $O(\log n \log \log n)$.
\end{proof}

\section{Achieving Constant Amortized Running Time}
\label{sec:amortized}

Although Theorem \ref{thm:failure_prob} bounds the worst-case running
time of operations (with high probability), it does not bound the
amortized running time of the Dancing-Walk Algorithm. In this section, we
show how to modify the Dancing-Walk Algorithm so that Theorem
\ref{thm:failure_prob} continues to hold, and so that the amortized
cost of performing $n$ edge insertions is $O(n)$ with high probability
in $n$.

\paragraph{The Initial Union-Find Bottleneck} Recall that whenever
an edge $e_i$ is inserted, the Dancing-Walk Algorithm begins the
operation by determining which of the two trees $T_1$ and $T_2$ that
are being combined are smaller. In order to do this, the Dancing-Walk
Algorithm maintains a union-find data structure, storing the size of
each (undirected) tree. Maintaining such a data structure is not
viable if we wish to perform operations in constant amortized time,
however, since performing $n$ unions and $n$ finds with a union-find
data structure requires $\Omega(\alpha(n, n))$ amortized time per
operation \cite{FredmanSa89}, where $\alpha(n, n)$ is the inverse Ackermann function.

\paragraph{Replacing Size with Combination Rank} We now modify the Dancing-Walk
Algorithm so that the algorithm no longer needs to keep track of the
size $|T|$ of each tree in the graph. Instead the algorithm keeps
track of the \defn{combination rank} $R(T)$ of each tree $T$, which we
define recursively as follows:
\begin{itemize}
\item The combination rank of a tree $T$ of size $1$ is $R(T) = 0$.
\item Whenever two trees $T_1$ and $T_2$ are combined by an edge
  insertion, the new tree $T_3$ has combination rank,
  $$R(T_3) = \begin{cases}
    \max(R(T_1), R(T_2)) & \text{ if }R(T_1) \neq R(T_2) \\
    R(T_1) + 1           & \text{ if }R(T_1) = R(T_2).
  \end{cases}$$
\end{itemize}

Define the \defn{Rank-Based Dancing-Walk Algorithm} to be the same as the
Dancing-Walk Algorithm, except that whenever two trees $T_1$ and
$T_2$ are combined by an edge-insertion, the tree with smaller
combination rank (rather than the tree with smaller size) is one in
which random-walk searches are performed. As in the Dancing-Walk
Algorithm, ties can be broken arbitrarily.

\paragraph{Correctness of the Rank-Based Dancing-Walk Algorithm}
Before describing how to efficiently implement the Rank-Based
Dancing-Walk Algorithm, we first prove its correctness.

\begin{lemma}
  With high probability in $n$, the Rank-Based Dancing-Walk Algorithm can
  process all of $e_1, \ldots, e_{n - 1}$ without declaring
  failure. If the algorithm does not declare failure, then each step
  flips $O(\log \log n)$ edges and takes $O(\log n \log \log n)$
  time. Additionally, no vertex's out-degree ever exceeds $3$.
  \label{lem:rankcorrectness}
\end{lemma}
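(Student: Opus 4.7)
The plan is to port the proof of Theorem \ref{thm:failure_prob} almost verbatim, after re-establishing the two structural properties (the Strong Load Balancing Property and the Sparsity Property) in terms of combination rank rather than tree size. The key observation is that combination rank is a deterministic function of the edge sequence $e_1, \ldots, e_{n-1}$ alone: it does not depend on the algorithm's random bits. Moreover, for any tree $T$, a simple induction on the definition shows that $|T| \ge 2^{R(T)}$, so $R(T) \le \log n$ for every tree that ever appears.

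The first step is to reprove the Strong Load Balancing Property: for each vertex $v$, there is a deterministic set $S_v \subseteq [n]$ with $|S_v| \le \log n$ such that $v \in B_t$ only if $t \in S_v$. Suppose that on the insertion of $e_t$ we combine trees $T_1$ and $T_2$ with $R(T_1) \le R(T_2)$, and that $v \in T_1$, so that the random walk happens in $T_1$. If $R(T_1) < R(T_2)$, then the new rank of $v$'s tree becomes $R(T_2) > R(T_1)$; if $R(T_1) = R(T_2)$, it becomes $R(T_1)+1 > R(T_1)$. Either way, the rank of the (undirected) tree containing $v$ \emph{strictly increases} whenever $v \in B_t$. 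Since the rank is bounded by $\log n$ and the sequence of ranks attained by $v$'s tree is determined by $(e_1,\ldots,e_{n-1})$ alone, the property follows. The Sparsity Property is unchanged since it refers only to the algorithm's internal random-walk procedure, which has not been modified.

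With these two properties re-established, Lemmas \ref{lem:bounding_subtree_options} and \ref{lem:bounding_volunteers} carry over with the same proofs: the bound on the number of possible subtrees $B_t$ only uses the Strong Load Balancing Property to argue that each vertex $v$ has at most $\log n$ candidate in-edges (one per step $t \in S_v$), and the concentration bound for the number of volunteerings only uses the Load Balancing Property to bound $\sum_t p_t$. Hence Proposition \ref{prop:manyvolunteers} holds verbatim, which in turn gives the same high-probability guarantee that each edge-insertion succeeds within $O(\log n)$ random-walk attempts, flips only $O(\log \log n)$ edges, and never creates an out-degree exceeding $3$.

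The remaining obstacle---and the only genuinely new ingredient---is the running-time bound of $O(\log n \log \log n)$ per operation. For this we must exhibit a data structure that, in $O(\log n)$ worst-case time per edge insertion, identifies which of $T_1, T_2$ has smaller combination rank and updates the rank of the resulting tree. This can be accomplished by maintaining, for each tree $T$, a single representative node labeled with $R(T)$ and linking trees together in a height-$O(\log n)$ pointer structure whose root stores the current rank; comparing and updating ranks then costs $O(\log n)$. Combined with the $O(\log n \log \log n)$ cost of the random walks themselves, this yields the claimed per-operation bound and completes the proof.
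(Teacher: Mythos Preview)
Your proof is correct and follows essentially the same route as the paper: you identify that it suffices to re-establish the Strong Load Balancing Property for the rank-based variant, observe that combination rank is deterministic in the edge sequence, show $R(T)\le\log|T|$ by induction, and argue that the rank of $v$'s tree strictly increases whenever $v$ lies in the search tree $B_t$. The paper's proof is exactly this, only slightly terser (it does not separately discuss the rank-maintenance data structure inside this lemma, deferring that to Lemma~\ref{lem:combrank}).
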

\begin{proof}
  In order for the proof to follow just as in Theorem
  \ref{thm:failure_prob}, it suffices to show that the Strong Load
  Balancing Property holds for the Rank-Based Dancing-Walk Algorithm.

  Note that the rank $R(T)$ of a tree $T$ is determined entirely by
  the edge-sequence $(e_1, \ldots, e_{n - 1})$. It therefore suffices
  to show that each vertex $v$ appears in at most $\log n$ search
  trees.

  Whenever a vertex $v$ appears in a search tree for some step $t$,
  the combination rank of the tree containing $v$ increases during
  that step $t$. Thus it suffices to bound the maximum combination
  rank by $\log n$.

  To bound the maximum combination rank, we observe as an invariant
  that $R(T)$ never exceeds $\log |T|$ for any tree $T$. To prove the
  invariant, consider two trees $T_1$ and $T_2$ that are combined by
  an edge-insertion. If $R(T_1) \neq R(T_2)$, then the new tree $T_3$
  will have rank
  $R(T_3) = \max(R(T_1), R(T_2)) \le \max(\log |T_1|, \log |T_2|) \le
  \log |T_3|$, as desired. On the other hand, if $R(T_1) = R(T_2)$,
  then the new tree $T_3$ will have rank
  $R(T_3) = R(T_1) + 1 \le \log |T_1| + 1$. It we set $T_1$ to be 
  the smaller of the two trees $T_1$ and $T_2$, then it follows that
  $R(T_3) \le \log |T_3|$. This completes the proof of the invariant,
  which bounds the maximum combination rank by $\log n$, thereby
  establishing the Strong Load Balancing Property, as desired.
\end{proof}

\paragraph{Efficient Computation of Combination Rank}
As the edges $e_1, \ldots, e_{n - 1}$ arrive, the Rank-Based Dancing-Walk
Algorithm maintains a \defn{combination rank-maintenance data structure}, which
begins with $n$ vertices (i.e., $n$ rank-$0$ trees), and supports a
single operation:
\begin{itemize}
\item \textbf{Combine($v_1, v_2$), where $v_1, v_2$ are vertices.} If
  $T_1$ and $T_2$ are the connected components containing $v_1$ and
  $v_2$, respectively, then this function determines which of $R(T_1)$
  or $R(T_2)$ is smaller (breaking ties arbitrarily). If
  $T_1 \neq T_2$ then $T_1$ and $T_2$ are then combined to a single
  component, and otherwise the fact that $T_1 = T_2$ is reported to
  the user.
\end{itemize}
Each time that an edge $e_i = (v_1, v_2)$ is inserted, the function
Combine($v_1, v_2$) is invoked by the Rank-Based Dancing-Walk Algorithm
in order to determine which tree to perform random-walk searches in.

Lemma \ref{lem:combrank} gives a simple data structure for
efficiently implementing combination rank-maintenance.
\begin{lemma}
  The combination rank-maintenance data structure can be implemented in space
  $O(n)$ so that Combine($v_1, v_2$) takes time
  $O(\min(R(T_1), R(T_2)))$ and incurs at most $O(1)$ writes.
  \label{lem:combrank}
\end{lemma}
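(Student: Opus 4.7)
The plan is to represent each equivalence class by a parent-pointer tree, augmented so that a \emph{parallel walk} from $v_1$ and $v_2$ can identify the root of the smaller-rank tree without having to probe the larger one. At each vertex $v$ I would store a parent pointer $\pi(v)$ (a self-loop when $v$ is a root) together with a value $\rho(v)$ defined as follows: when $v$ is a root, $\rho(v)=R(T)$ for the tree $T$ it heads; when $v$ first becomes a non-root, $\rho(v)$ is frozen at exactly the rank of the subtree that $v$ headed at that moment. A direct induction on the combine rule establishes the strict-monotonicity invariant $\rho(\pi(v))>\rho(v)$ for every non-root $v$, and hence the depth of any vertex in a tree $T$ is at most $R(T)$. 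The storage is clearly $O(n)$.

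For $\textsc{Combine}(v_1,v_2)$ I would maintain two pointers $p_1,p_2$ initialized to $v_1,v_2$ and, at each step, advance whichever pointer has the smaller $\rho$-value (ties broken arbitrarily, and never advancing a pointer that is already at a root). The loop halts as soon as one of three conditions holds: (i) $p_1=p_2$ (same tree); (ii) both pointers are at roots; or (iii) one pointer, say $p_1$, is at a root while $\rho(p_2)>\rho(p_1)$. In case (i) we simply report ``same tree''. In cases (ii) and (iii) the combine is performed by setting the parent of the smaller-or-equal-$\rho$ root to the other pointer, and bumping the surviving root's stored $\rho$ by one only in case (ii) with equal root $\rho$-values. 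The critical point is case (iii): attaching $r_1=p_1$ beneath $p_2$ preserves the monotonicity invariant because $\rho(p_2)>\rho(p_1)$, and since $R(T_2)\ge\rho(p_2)>\rho(p_1)=R(T_1)$ the combine rule declares that the merged tree inherits $T_2$'s root and rank, so no cell at $r_2$ ever has to be touched. In every case only $O(1)$ cells are written.

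For the time bound, set $m=\min(R(T_1),R(T_2))$ and assume without loss of generality that $R(T_1)=m$. Each step strictly increases one of $\rho(p_1),\rho(p_2)$ by at least one, and a pointer only advances while its $\rho$-value is no larger than its partner's. Since $\rho(p_1)\le R(T_1)=m$ throughout the run, every advance of $p_2$ is preceded by $\rho(p_2)\le\rho(p_1)\le m$, so $p_2$ can be advanced at most $m+1$ times before its $\rho$ either overshoots $m$ (firing case (iii)) or lands on its own root (firing case (ii)). Similarly $p_1$ is advanced at most $m$ times before it reaches $r_1$. Summing the two counters yields the claimed $O(\min(R(T_1),R(T_2)))$ bound.

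The main subtlety, and where I expect the real case-work, is the same-tree branch: I must verify that $p_1$ and $p_2$ actually coincide rather than passing one another on a common root-path. I would use the strict monotonicity of $\rho$ together with the fact that the least common ancestor $u$ of $v_1,v_2$ lies on both root-paths: once either walk is at $u$, $u$ is the unique ancestor of each $v_i$ with $\rho$-value $\rho(u)$, so the partner walk must land exactly on $u$ when its $\rho$ first reaches $\rho(u)$; and once both walks are above $u$ they traverse a single linear chain, so the alternating-advance rule deterministically drives them together. Since $\rho(u)\le R(T_1)=R(T_2)=m$, this branch also terminates within $O(m)$ steps. Overall I expect the bulk of the proof to be this bookkeeping---verifying the $\rho$-monotonicity invariant, checking the non-crossing property of the parallel walk, and the step counting above---rather than any deeper combinatorial fact.
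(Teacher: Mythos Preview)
Your argument is correct, but it takes a genuinely different route from the paper's. The paper does \emph{not} use parent pointers on the original vertices with frozen rank labels. Instead it builds a separate \emph{rank tree} for each component in which every original vertex is a \emph{leaf} at depth exactly $R(T)$; internal nodes are auxiliary. With this uniform-depth invariant, Combine is implemented by walking up from $v_1$ and $v_2$ in simple lockstep (no $\rho$-comparison needed): after $\min(R(T_1),R(T_2))$ steps one walk hits its root, and the other walk is sitting at exactly the right height for the attachment (one more step gives a node at height $R(T_1)+1$ in $T_2$, to which $T_1$'s root is linked). The same-tree test is then trivial---both walks hit the common root simultaneously---so the LCA/non-crossing analysis you flag as ``the main subtlety'' simply does not arise.

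What each approach buys: the paper's uniform-depth trick makes the walk and the same-tree detection essentially one-line arguments, at the price of allocating up to $n-1$ auxiliary internal nodes. Your union-by-rank variant avoids auxiliary nodes entirely and keeps all data on the original vertices, but you pay for it with the $\rho$-guided walk and the LCA meeting argument. Both yield the stated $O(\min(R(T_1),R(T_2)))$ time and $O(1)$ writes; your version is closer in spirit to classical union-by-rank with early termination, while the paper's is tailored to make the depth of every leaf equal to the component's rank.
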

\begin{proof}
  The combination rank-maintenance data structure stores all of the vertices in
  each connected component $T$ in what we call a \defn{rank tree}. For
  a given connected component $T$, all of the vertices in $T$ are
  leaves in $T$'s rank tree, and all of the leaves appear the same
  depth $R(T)$. This means that, given a vertex $v \in T$, one can
  compute $R(T)$ in time $O(R(T))$ by following a leaf-to-root path in
  the rank tree.

  In order to combine two components $T_1$ and $T_2$ such that
  $R(T_1) < R(T_2)$, we simply add a pointer from the root of the
  rank-tree for $T_1$ to any node in $T_2$ at height $R(T_1) + 1$
  above the leaves. In order to combine two components $T_1$ and $T_2$
  such that $R(T_1) = R(T_2)$, we simply add a new root node $r$ and
  add pointers from the roots of the rank trees for $T_1$ and $T_2$ to
  $r$. In both cases, the rank tree for $T_1 \cup T_2$ can be computed
  in time $\min(R(T_1), R(T_2))$ from the rank trees for $T_1$ and
  $T_2$. It follows that, given two vertices $v_1$ and $v_2$ appearing
  in connected components $T_1$ and $T_2$, we can perform
  Combine($v_1, v_2$) in time $\min(R(T_1), R(T_2))$.

  Each call to Combine($v_1, v_2$) adds at most $O(1)$ new pointers to
  the data structure, requiring at most $O(1)$ writes. Since the
  leaves of the rank trees are the $n$ vertices in the graph, the sum
  of the sizes of the rank trees is $O(n)$. Thus the combination rank-maintenance
  data structure takes space $O(n)$, as desired.
\end{proof}

\paragraph{An Amortized Running-Time Analysis}
In the rest of this section, we give an amortized analysis of the
Rank-Based Dancing-Walk Algorithm. The first step in the analysis is to
bound the total time needed for all of the operations in the
combination rank-maintenance data structure. 

Let $\mathcal{T}$ be the set of pairs $(T_1, T_2)$ such that for some
step $t$, trees $T_1$ and $T_2$ are connected components in the graph
$(V, \{e_1, \ldots, e_{t - 1}\})$ and are combined by edge $e_i$ into
a single tree. The order of each pair (i.e., $(T_1, T_2)$ vs
$(T_2, T_1)$) is selected so that $|T_1| \le |T_2|$, with ties broken
arbitrarily.

Each combination $(T_1, T_2)$ results in a rank-maintenance operation
that costs $O(\min(R(T_1), R(T_2))$. Lemma \ref{lem:sum_ranks} shows
that the sum of these costs is $O(n)$.

\begin{lemma}
  $$\sum_{(T_1, T_2) \in \mathcal{T}} \min(R(T_1), R(T_2)) =  O(n).$$
  \label{lem:sum_ranks}
\end{lemma}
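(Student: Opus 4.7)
The plan is to charge, for each $r \ge 1$, the Combine operations contributing $r$ to the sum against the total population of rank-$r$ trees that the sequence can ever produce, and then sum a convergent series.

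First, I would bound the total number $B_r$ of trees of rank exactly $r$ that can ever come into existence during the whole sequence of Combines. For $r=0$ this is simply $B_0 = n$, since each vertex starts as its own singleton tree of rank $0$. For $r \ge 1$, the recursive definition of $R$ dictates that a rank-$r$ tree can arise only by combining two rank-$(r-1)$ trees: a Combine of trees of unequal rank produces a tree whose rank equals the larger input, and so creates no new rank value—only a rank-increment Combine (equal-rank case) introduces a tree at a higher rank. Since each rank-increment at level $r-1$ consumes two rank-$(r-1)$ trees, and each rank-$(r-1)$ tree can be consumed at most once over its lifetime, I obtain the recurrence $B_r \le B_{r-1}/2$, and hence $B_r \le n/2^r$ by induction.

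Next, I would observe that each Combine with $\min(R(T_1), R(T_2)) = r \ge 1$ destroys at least one rank-$r$ tree (the smaller-rank input in the unequal case, or both inputs in the equal-rank case). Letting $c_r$ denote the number of Combines contributing $r$ to the sum, this gives $c_r \le B_r \le n/2^r$. Summing yields
\[
\sum_{(T_1, T_2) \in \mathcal{T}} \min(R(T_1), R(T_2)) \;=\; \sum_{r \ge 1} r \cdot c_r \;\le\; n \sum_{r \ge 1} \frac{r}{2^r} \;=\; O(n).
\]

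The only delicate point is the inductive step $B_r \le B_{r-1}/2$: one must be careful that a Combine of unequal-rank trees preserves rather than creates a tree of the larger rank, and therefore should not be counted toward $B_r$. This is precisely why the definition of $R$ increments the rank only in the equal-rank case, and it is the feature that makes the geometric decay $B_r \le n/2^r$ possible. Beyond this counting observation, no tools are needed other than the convergence of $\sum_r r/2^r$.
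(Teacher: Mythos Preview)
Your proof is correct and takes a genuinely different route from the paper's. The paper first invokes the inequality $R(T) \le \log |T|$ to reduce the sum to $\sum_{(T_1,T_2)\in\mathcal{T}} \log |T_1|$ (with the convention $|T_1|\le |T_2|$), then rewrites this as a per-vertex sum and uses the fact that whenever $v\in T_1$ the combined tree at least doubles in size, so each vertex contributes at most $\sum_k k/2^{k-1}=O(1)$. You instead argue purely in terms of ranks: by the standard union-by-rank counting, the number of rank-$r$ ``births'' is at most $n/2^r$, and every Combine with $\min(R(T_1),R(T_2))=r$ terminates at least one rank-$r$ lifespan, giving $c_r\le n/2^r$ and hence $\sum_r r\,c_r=O(n)$. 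Your argument is more self-contained (it does not need $R(T)\le\log|T|$) and is the cleaner proof of this particular lemma; the paper's detour through sizes has the side benefit of also establishing $\sum_{(T_1,T_2)}\min(\log|T_1|,\log|T_2|)=O(n)$, which it uses elsewhere. The one place your write-up could be tightened is in making explicit that $B_r$ counts rank-$r$ \emph{births} (equivalently, lifespans) rather than distinct tree-states of rank $r$; once that is said, both $B_r\le B_{r-1}/2$ and $c_r\le B_r$ are immediate, since each birth can be terminated at most once.
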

\begin{proof}
  Recall from the proof of Lemma \ref{lem:rankcorrectness} that
  $R(T_1) \le \log |T_1|$ and $R(T_2) \le \log |T_2|$. It follows that,
  \begin{align*}
     \sum_{(T_1, T_2) \in \mathcal{T}} \min(R(T_1), R(T_2)) 
    & \le \sum_{(T_1, T_2) \in \mathcal{T}} \min(\log |T_1|, \log |T_2|) \\
    & = \sum_{(T_1, T_2) \in \mathcal{T}} \log |T_1|. \\
  \end{align*}

  Rearranging the above sum to be from the perspective of vertices gives,
  \begin{equation}
     \sum_{v \in V} \phantom{f}\sum_{(T_1, T_2) \in \mathcal{T} \text{ s.t. } v \in T_1} \frac{\log |T_1|}{|T_1|}. \\
    \label{eq:vertex_log_sum}
  \end{equation}
  Each time that vertex $v$ appears in $T_1$ for some pair
  $(T_1, T_2) \in \mathcal{T}$, the combined tree $T_1 \cup T_2$ has
  size at least twice as large as $|T_1|$. It follows that for each
  power of two $2^k$, $v$ appears in at most one pair $(T_1, T_2)$
  where $|T_1| \in [2^{k - 1}, 2^k)$. Thus \eqref{eq:vertex_log_sum}
  is at most,
  \begin{align*}
     \sum_{v \in V} \sum_{k = 1}^{\lceil \log n \rceil} \frac{k }{2^{k-1}} 
    & =  \sum_{v \in V} O(1) \\ & = O(n).
    \label{eq:vertex_log_sum}
  \end{align*}
\end{proof}

Next we bound the total time required for all of the the random-walk
attempts to be performed by the algorithm. Since every edge-insertion
results in at least one random-walk attempt, it does not suffice to
simply bound the time for each random-walk attempt by
$O(\log \log n)$.

Consider the tree $B_t$ in which a random-walk attempt is performed.
Intuitively, if the tree $B_t$ is very small, then the first random-walk
attempt should terminate in $o(\log \log n)$ steps, having arrived at
a leaf of the tree. Lemma \ref{lem:bound_random_walk_length} captures
this formally, bounding the length of the random-walk attempt by a
geometric random variable with expected value $O(\log |B_t|)$.

\begin{lemma}
  Consider a random-walk attempt performed in tree $B_t$. For any
  $k \in \mathbb{N}$, the probability that the random walk lasts for
  more than $4 k \log |B_t|$ steps is at most $\frac{1}{2^k}$.
  \label{lem:bound_random_walk_length}
\end{lemma}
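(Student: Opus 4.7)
The plan is to split the random walk into $k$ independent \emph{epochs} of length $\lfloor 4\log|B_t|\rfloor$ and show that in each epoch the walk terminates with probability at least $1/2$, so that the probability of survival through all $k$ epochs is at most $2^{-k}$. The key structural fact to exploit is that, by construction, every internal node of $B_t$ has exactly two primary out-going children that the walk chooses between uniformly at random (vertices with out-degree less than $2$ are leaves where the walk halts, and at out-degree-$3$ vertices only the first two primary edges are used). Consequently, $B_t$ is a full binary tree (every internal node has exactly two children), and the walk is a uniformly random root-to-leaf descent; in particular, the probability of arriving at any specific node $u \in B_t$ that lies at depth $d$ below the walk's current position after exactly $d$ further steps is exactly $2^{-d}$.

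Set $m = |B_t|$. First I would prove the following single-epoch claim: if at the start of an epoch the walk sits at an internal node $v$, then the probability the walk has not terminated after $2\log m$ additional steps is at most $1/m$. This follows because, by the previous paragraph, the probability that the walk is at a \emph{specific} internal node at depth $2\log m$ below $v$ is exactly $2^{-2\log m}= 1/m^2$, and the total number of internal nodes at that depth is at most $m$ (trivially, since the whole tree has at most $m$ nodes). Summing over those at most $m$ candidate nodes gives the $1/m \le 1/2$ bound. This uses the full-binary-tree property essentially: every path of length $2\log m$ in the walk has the same probability $1/m^2$ of being realized, and only the live (internal) endpoints matter.

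Given the single-epoch claim, iteration finishes the proof. Conditioned on the walk surviving any prefix, the walk's current position is still some internal node, and the subtree below it has size at most $m$, so the single-epoch claim applies again with the same bound $1/2$. Thus the probability of surviving $k$ epochs of $2\log m$ steps each is at most $(1/2)^k$, and the bound $4k\log|B_t|$ in the statement is comfortably looser than the $2k\log m$ the argument actually needs. The main (mild) obstacle is making precise the conditional application across epochs: one must observe that the above counting bound depends only on the subtree rooted at the current vertex and on its being a full binary tree, both of which are preserved regardless of the walk's history, so the $1/2$ factors multiply cleanly; degenerate cases such as $m \le 1$ are handled by noting that the walk must terminate immediately at the root.
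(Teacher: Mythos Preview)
Your argument is correct, and it takes a genuinely different route from the paper's own proof. The paper tracks the size $s_i$ of the subtree below the walk's current position, observes that each step has probability at least $1/2$ of achieving $s_i \le s_{i-1}/2$ (since one of the two children carries at most half the subtree), and then applies a Chernoff bound: over $4k\log|B_t|$ steps the expected number of ``halving'' steps is $2k\log|B_t|$, whereas survival requires there be at most $\log|B_t|$ of them. Your proof instead exploits directly that $B_t$ is a \emph{full} binary tree, so that every depth-$d$ descendant is reached with probability exactly $2^{-d}$; a union bound over the at most $|B_t|$ internal nodes at depth $2\log|B_t|$ then gives a per-epoch survival probability of at most $1/|B_t|\le 1/2$, and the epochs multiply. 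Your approach is more elementary---no Chernoff inequality is needed---and in fact yields the stated bound already after $2k\log|B_t|$ steps rather than $4k\log|B_t|$. The paper's halving-plus-Chernoff argument, on the other hand, is slightly more robust in that it only needs ``at least one child carries at most half the mass,'' and so would go through verbatim for random walks that are not perfectly uniform over children.
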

\begin{proof}
  Define $s_1, s_2, \ldots$ so that if the random walk is at vertex
  $v$ at the beginning of its $i$-th step, then $s_i$ is the size of
  the subtree rooted at $v$ in $B_t$. Each step in the random walk has
  at least a $\frac{1}{2}$ probability of reducing the size of the
  subtree in which it resides by at least a factor of two. In other
  words, each $s_i$ has at least a $\frac{1}{2}$ probability of
  satisfying $s_i \le \frac{1}{2}s_{i - 1}$. After $4 k \log |B_t|$
  steps, the expected number of steps $s_i$ for which
  $s_i \le \frac{1}{2} s_{i - 1}$ is at least $2 k \log |B_t|$. In order for
  the random walk to have not terminated, the number of steps $s_i$
  for which $s_i \le \frac{1}{2} s_{i - 1}$ must be at most
$\log|B_t|$. By a (very loosely applied) Chernoff bound, the
probability that a sum $R$ of independent indicator random variables
with total mean $\E[R] = 2k \log |B_t|$ has value $R \le \log|B_t|$,
is at most $\frac{1}{2^k}$.
\end{proof}

By Lemma \ref{lem:bound_random_walk_length}, the random-walk attempts
for each edge insertion $e_t$ will take time at most $O(\log |B_t|)$
in expectation. This, in turn, is at most $O(\log |T|)$, where $T$ is
the connected component of $(V, \{e_1, \ldots, e_{t - 1}\})$
containing $B_t$. 

In order to bound the total time required by the random walks, we wish
to prove that,
\begin{equation}
  \left(\sum_{(T_1, T_2) \in \mathcal{T}}
\begin{cases}
  \log |T_1| & \text{ if } R(T_1) \le R(T_2) \\
  \log |T_2| & \text{ if } R(T_2) < R(T_1) 
\end{cases}\right) =  O(n).
\label{eq:sum_log_sizes}
\end{equation}
Recall from the proof of Lemma \ref{lem:rankcorrectness} that
$R(T) \le \log |T|$ for each tree $T$. Thus \eqref{eq:sum_log_sizes}
is a stronger inequality than the one proven in Lemma
\ref{lem:sum_ranks}.

In some cases, the combination rank of the tree $T$ could be
significantly smaller than $\log |T|$. For example, if tree $T$ has
combination rank $1$, and $\Omega(n)$ trees with combination ranks $0$
are combined with $T$, then $T$ could be of size $\Omega(n)$ while
still having combination rank only $1$. One consequence of this is
that, for a pair $(T_1, T_2) \in \mathcal{T}$, it may be that
$R(T_2) \ll R(T_1)$ but that $\log |T_2| \gg \log |T_1|$, meaning that
the algorithm selects the tree $T_2$ to perform random-walk attempts
in, even though $T_1$ would have been a better choice.

Lemma \ref{lem:sum_log_sizes} uses a potential-function argument to
prove \eqref{eq:sum_log_sizes}.
\begin{lemma}
  $$\left(\sum_{(T_1, T_2) \in \mathcal{T}}
  \begin{cases}
    \log |T_1| & \text{ if } R(T_1) \le R(T_2) \\
    \log |T_2| & \text{ if } R(T_2) < R(T_1) 
  \end{cases}\right) = O(n).$$
  \label{lem:sum_log_sizes}
\end{lemma}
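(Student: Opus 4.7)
The plan is to follow the approach sketched in the technical overview: reduce to bounding a slightly smaller quantity and then handle that quantity with a one-line potential-function argument. In each pair $(T_1,T_2)\in\mathcal{T}$, write $T_a$ for whichever of the two trees has the smaller rank (breaking ties consistently with the statement) and $T_b$ for the other; the summand in the lemma is then exactly $\log|T_a|$. Splitting $\log|T_a|=(\log|T_a|-R(T_a))+R(T_a)$ decomposes the target sum into two pieces. The piece $\sum R(T_a)=\sum\min(R(T_1),R(T_2))$ is already $O(n)$ by Lemma \ref{lem:sum_ranks}, so the goal reduces to proving $\sum(\log|T_a|-R(T_a))=\sum\log P(T_a)=O(n)$, where I set $P(T):=|T|/2^{R(T)}$.

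Next I would introduce the potential $\Phi:=2\sum_T P(T)$, summed over the connected components of the current forest. Initially every component is a singleton with rank $0$, so $\Phi=2n$, and $\Phi\ge 0$ throughout the process. The core computation is to show that each merge decreases $\Phi$ by at least $\log P(T_a)$, the cost of that merge. A short case analysis on the recursive definition of combination rank handles this cleanly: when $R(T_a)<R(T_b)$, the new rank equals $R(T_b)$, so the drop in $\sum_T P(T)$ is $|T_a|(2^{-R(T_a)}-2^{-R(T_b)})\ge P(T_a)/2$; when the two ranks are equal, the rank increments by one and the drop in $\sum_T P(T)$ equals $(P(T_a)+P(T_b))/2\ge P(T_a)/2$. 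Multiplying through by the factor of $2$ in $\Phi$ yields a per-merge drop of at least $P(T_a)$.

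It remains to verify $\log P(T_a)\le P(T_a)$, which is where the scaling of $\Phi$ pays off. The invariant $R(T)\le\log|T|$ established in the proof of Lemma \ref{lem:rankcorrectness} gives $2^{R(T)}\le|T|$, so $P(T)\ge 1$ for every tree $T$, and $\log_2 x\le x$ for all $x\ge 1$. Hence the per-merge cost is dominated by the per-merge drop in $\Phi$, and summing telescopically gives $\sum\log P(T_a)\le\Phi_{\mathrm{initial}}-\Phi_{\mathrm{final}}\le 2n$. Combined with the $O(n)$ bound on $\sum R(T_a)$, this establishes the lemma. I do not foresee a genuine obstacle; the only subtlety is the need to pass through the auxiliary quantity $\log P(T_a)$ instead of $\log|T_a|$ directly, which is essential because a tree can have $|T|$ much larger than $2^{R(T)}$, so $\log|T|$ is not itself dominated by the drop in $\Phi$.
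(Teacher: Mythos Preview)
Your proposal is correct and follows essentially the same route as the paper: the same reduction via Lemma~\ref{lem:sum_ranks} to the quantity $\sum \log\bigl(|T_a|/2^{R(T_a)}\bigr)$, and the same potential $\sum_T |T|/2^{R(T)}$ analyzed by the same two-case computation on the rank recursion. The only cosmetic differences are that you phrase the argument as a potential drop rather than a token invariant and that you pin down the multiplicative constant as $2$ where the paper leaves it as an unspecified ``sufficiently large constant~$\rho$''.
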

\begin{proof}
  By Lemma \ref{lem:sum_ranks}, it suffices to show that,
  \begin{equation}
    \left(\sum_{(T_1, T_2) \in \mathcal{T}}
  \begin{cases}
    \log |T_1| - R(T_1) & \text{ if } R(T_1) \le R(T_2) \\
    \log |T_2| - R(T_2) & \text{ if } R(T_2) < R(T_1) 
  \end{cases}\right) =  O(n).
  \label{eq:sum_diff}
\end{equation}

We prove \eqref{eq:sum_diff} by an amortization argument. We begin by
assigning some large positive constant number $\rho$ of tokens to each
vertex $v$. Whenever two trees $T_1$ and $T_2$ are combined, the new
tree $T_3 = T_1 \cup T_2$ is given the tokens from each of $T_1$ and
$T_2$, and then pays
$$\begin{cases}
    \log |T_1| - R(T_1) & \text{ if } R(T_1) \le R(T_2) \\
    \log |T_2| - R(T_2) & \text{ if } R(T_2) < R(T_1). 
  \end{cases}$$ tokens to the algorithm.

  In order to prove \eqref{eq:sum_diff}, we wish to prove that the
  final tree consisting of all edges $\{e_1, \ldots, e_{n - 1}\}$ has
  a non-negative number of tokens. This means that the total token
  expenditure due to all combinations is at most $\rho n = O(n)$.

  We prove as an invariant that whenever a new tree $T$ is created, it
  has at least $\rho \frac{|T|}{2^{R(T)}}$ tokens. As a base case,
  this is true for trees $T$ consisting of a singleton node $v$, since
  each such tree initially has $\rho$ tokens.

  Consider a pair of trees $T_1, T_2$ that are combined by some
  edge-insertion $e_t$, and let $T_3$ be the tree that combines
  them. Let $R_i = R(T_i)$ and $S_i = |T_i|$ for $i \in \{1, 2,
  3\}$. We are given as an inductive hypothesis that $T_1$ has at
  least $\rho \frac{S_1}{2^{R_1}}$ tokens and that $T_2$ has at least
  $\rho \frac{S_2}{2^{R_2}}$ tokens. We wish to show that $T_3$ has at
  least $\rho \frac{S_3}{2^{R_3}}$ tokens.

  We begin by considering the case where $R_1 \neq R_2$, and we assume
  without loss of generality that $R_1 < R_2$. This means that
  $R_3 = R_2$ and $S_3 = S_1 + S_2$. By the inductive hypotheses, the
  number of tokens that $T_3$ has is at least
  \begin{align*}
    \rho\frac{S_1}{2^{R_1}} + \rho\frac{S_2}{2^{R_2}} - \log \left(S_1 / 2^{R_1}\right) 
    & = \rho 2^{R_2 - R_1}\frac{S_1}{2^{R_2}} + \rho \frac{S_2}{2^{R_2}} - \log \left(S_1 / 2^{R_1}\right) \\
    & = \rho \frac{S_1 + S_2}{2^{R_2}} + \rho \left(2^{R_2 - R_1} - 1\right)\frac{S_1}{2^{R_2}} - \log \left(S_1 / 2^{R_1}\right) \\
    & = \rho \frac{S_3}{2^{R_3}} + \rho \left(2^{R_2 - R_1} - 1\right)\frac{S_1}{2^{R_2}} - \log \left(S_1 / 2^{R_1}\right). \\
  \end{align*}
  In order to complete the argument, we wish to show that
  \begin{equation}
    \rho \left(2^{R_2 - R_1} - 1\right)\frac{S_1}{2^{R_2}} \ge \log \left(S_1 / 2^{R_1}\right).
    \label{eq:Runequalinequality}
  \end{equation}
  To prove this, note that
  \begin{align*}
    \rho \left(2^{R_2 - R_1} - 1\right)\frac{S_1}{2^{R_2}} 
    & = \rho \left(1 - 2^{R_1 - R_2}\right)\frac{S_1}{2^{R_1}} \\
    & \ge \frac{1}{2} \rho \frac{S_1}{2^{R_1}}. \\
  \end{align*}
  Assuming $\rho$ is a sufficiently large constant,
  $\frac{1}{2} \rho x \ge \log x$ for any $x \ge 1$. Thus
  \eqref{eq:Runequalinequality} holds, implying that $T_3$ has at
  least $\rho \frac{S_3}{2^{R_3}}$ tokens, as desired. 

  Next we consider the case where $R_1 = R_2 = R$ for some $R$. Then
  the new tree $T_3$ has rank $R_3 = R + 1$ and size $S_3 = S_1 + S_2$. The number of
  tokens that $T_3$ has is at least,
  \begin{align*}
     \rho \frac{S_1}{2^{R}} + \rho \frac{S_2}{2^{R}}  - \log (S_1 / 2^{R}) 
    & = \rho \frac{S_1 + S_2}{2^{R}} - \log (S_1 / 2^{R}) \\
    & = \rho \frac{S_3}{2^{R_3}} + \rho \frac{S_1 + S_2}{2^{R + 1}} - \log (S_1 / 2^{R}).
  \end{align*}
  As long as $\rho$ is a sufficiently large constant, then
  $\rho \frac{S_1 + S_2}{2^{R + 1}} \ge \log (S_1 / 2^{R})$. Thus
  $S_3$ has at least $\rho \frac{S_3}{2^{R_3}}$ tokens, as desired.

  This completes the proof that every tree $T$ has at least
  $\rho \frac{|T|}{2^{R(T)}}$ tokens. Since the system begins with
  $O(n)$ tokens, and ends with a non-negative number of tokens, the
  total number of tokens spent must be $O(n)$. Thus the lemma is proven.
\end{proof}

Combining the preceding lemmas, we can now analyze the total time for
algorithm to perform all of the edge insertions
$e_1, \ldots, e_{n - 1}$.

\begin{theorem}
  To perform $n - 1$ edge insertions, the total time required by the
  Rank-Based Dancing-Walk Algorithm is at most $O(n)$ with high
  probability in $n$.
  \label{thm:amortized}
\end{theorem}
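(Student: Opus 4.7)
The plan is to split the total running time of the Rank-Based Dancing-Walk Algorithm into three contributions: (i) time spent in the combination rank-maintenance data structure, (ii) time spent on random-walk attempts (which dominates the edge-flip time, since each flip corresponds to a step of the successful walk), and (iii) $O(1)$ per-insertion bookkeeping. Contribution (iii) is $O(n)$ deterministically. For contribution (i), each call to Combine$(v_1, v_2)$ takes $O(\min(R(T_1), R(T_2)))$ time by \lemref{combrank}; summing over the $n-1$ insertions and invoking \lemref{sum_ranks} bounds (i) by $O(n)$ deterministically. All of the effort therefore goes into bounding (ii) by $O(n)$ with high probability.

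To handle (ii), I first condition on the high-probability success event from \lemref{rankcorrectness}, under which Proposition~\ref{prop:manyvolunteers} (applied in the rank-based setting) ensures that at every step $t$ at most half of the potential volunteer leaves of the search tree $B_t$ have previously volunteered. Under this event each random-walk attempt succeeds with probability at least $1/2$, so the number of attempts $M_t$ at step $t$ is stochastically dominated by a geometric random variable with mean $O(1)$, truncated at $d \log n$. By \lemref{bound_random_walk_length}, each attempt has length exceeding $4k \log |B_t|$ with probability at most $2^{-k}$, and in particular has mean $O(\log |B_t|) \le O(\log |T_t|)$, where $T_t$ denotes the smaller-rank tree combined at step $t$. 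Composing these two subexponential tails gives $\E[Y_t] = O(\log |T_t|)$, where $Y_t$ is the total random-walk time at step $t$. Summing over $t$ and applying \lemref{sum_log_sizes} yields $\E[\sum_t Y_t] = O(n)$.

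The main obstacle is to upgrade this expectation into a high-probability bound, since the per-attempt lengths are not mutually independent: the tree $B_t$ is itself determined by the algorithm's earlier random choices. The plan is to view $\sum_t Y_t$ as a sum of conditionally independent subexponential increments and apply a Bernstein/Freedman-type martingale concentration inequality, in the same spirit as \claimref{chernoffbound}. Conditional on the history up to the beginning of step $t$ and on $B_t$, the length of each new attempt is an independent subexponential random variable with predictable mean $O(\log |T_t|)$ and tail $\Pr[X > 4k\log|T_t|] \le 2^{-k}$. The resulting predictable quadratic variation of the total process is $O(\sum_t \log^2 |T_t|) = O(n \log n)$, while each individual increment is at most $O(\log n \log \log n)$ with high probability by \thmref{failure_prob}. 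These are exactly the inputs needed for a subexponential tail inequality to conclude that $\sum_t Y_t$ deviates from its $O(n)$ mean by at most $O(n)$ except with probability $1/\poly(n)$. Combining (i), (ii), and (iii) then yields the stated high-probability bound of $O(n)$ on the total running time.
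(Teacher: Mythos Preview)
Your proposal is correct and follows essentially the same architecture as the paper's proof: the same three-way decomposition into rank-maintenance time (handled by \lemref{combrank} and \lemref{sum_ranks}), random-walk time (whose expectation is bounded via \lemref{bound_random_walk_length} and \lemref{sum_log_sizes}), and $O(1)$ bookkeeping, followed by a martingale concentration step to upgrade the $O(n)$ expectation to a high-probability bound.

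The only substantive difference is in the concentration step. You compute the predictable quadratic variation $\sum_t \E[Y_t^2 \mid \mathcal{F}_{t-1}] = O\!\left(\sum_t \log^2 |T_t|\right) = O(n \log n)$ and invoke a Bernstein/Freedman-type inequality. The paper instead observes that each $q_t$ is deterministically bounded by $O(\log n \log\log n)$ (at most $d\log n$ attempts, each of length at most $c\log\log n$) and that, conditioned on the past, $q_t$ is still dominated by a geometric with mean $O(\log |T_t|)$; it then applies Hoeffding's (really Azuma's) inequality directly. Both routes yield $\exp(-\tilde{\Omega}(n))$ tails, so either suffices; the paper's version is a bit more economical because it avoids the variance computation altogether. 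Two minor remarks: the per-step bound $Y_t \le O(\log n \log\log n)$ is deterministic, not merely ``with high probability'' as you wrote, and in the rank-based setting the relevant correctness statement is \lemref{rankcorrectness} rather than \thmref{failure_prob}.
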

\begin{proof}
  By Lemma \ref{lem:combrank}, the combination rank-maintenance data structure
  takes time $O(\min(R(T_1), R(T_2)))$ to combine two tree $T_1$ and
  $T_2$. By Lemma \ref{lem:sum_ranks}, the total time taken by the
  data structure across all operations is $O(n)$.

  Let $q_t$ be the sum of the lengths of the random-walk attempts for
  each edge-insertion $t$.  Lemma \ref{lem:bound_random_walk_length}
  bounds $q_t$ by a geometric random variable with mean $O(\log |T|)$,
  where $T$ is the connected component of
  $(V, \{e_1, \ldots, e_{t - 1}\})$ in which the random-walk attempt
  is performed. It follows by Lemma \ref{lem:sum_log_sizes} that
  $$\E[\sum_t q_t] =  O(n).$$
  Moreover, regardless of the values of
  $q_1, \ldots, q_{t - 1}, q_{t + 1}, \ldots, q_{n - 1}$, the value of
  $q_t$ is guaranteed by Lemma \ref{lem:bound_random_walk_length} to
  be bounded above by a geometric random variable with expected value
  $O(\log|T|)$, and $q_t$ is also guaranteed to be deterministically
  at most $O(\log n \log \log n)$. Applying Hoeffding's Inequality, it
  follows that the probability of $\sum_t q_t$ deviating from its mean
  by more than $\Omega(n)$ is at most
  $\exp\left(-\tilde{\Omega}(n)\right)$, completing the proof of the
  theorem.

\end{proof}

\begin{remark}
  We remark that, in order to simplify the amortized analysis above,
  one can instead make the Rank-Based Dancing-Walk Algorithm slightly
  more complicated, and allow for a larger maximum out-degree, in order that the time to combine two trees $T_1$ and $T_2$ is at most $O(\min(\log |T_1|, \log |T_2|))$ in expectation.

  A first attempt at doing this might be to perform two random walks
  in parallel in each of $T_1$ and $T_2$. This would ensure that the
  expected time for the edge-insertion would be at most
  $O(\min(\log |T_1|, \log |T_2|))$. This modification to the
  algorithm breaks the Load Balancing Property, however, eliminating
  the proof of algorithm correctness.

  In order to rescue algorithm correctness, one can further modify the
  algorithm to allow for maximum out-degree $5$. One can then maintain
  two edge-orientations, one of maximum out-degree $3$ (that is
  maintained by the Rank-Based Dancing-Walk Algorithm), and one of
  maximum out-degree $2$ (which is maintained by performing random
  walks in both directions until a vertex with out-degree $1$ or
  smaller is found). Whenever an edge is inserted, it is inserted into
  both edge-orientations in parallel, and whichever edge-orientation
  completes first is the one that keeps the edge. This allows for an
  expected running time of at most $O(\min(\log |T_1|, \log |T_2|))$
  for an operation that combines two trees $T_1$ and $T_2$, while also
  maintaining the Strong Load Balancing Property in the
  edge-orientation with out-degree $3$.
\end{remark}

\section{A Tradeoff Curve Between Out-Degree and Number of Edges Flipped}\label{sec:tradeoff}

In this section, we consider a variant of the Rank-Based Dancing-Walk
Algorithm in which the maximum out-degree is permitted to be a larger
value $k + 1$. In particular, each vertex is now permitted up to $k$
primary out-going edges, and $1$ secondary out-going edge. Each step
of each random-walk search now selects one of $k$ edges to travel
down. If a random-walk search reaches a vertex with fewer than $k$
primary out-going edges, then the random walk succeeds and can stop at
that vertex. Otherwise, random walks last for $c\log_k \log n$ steps,
and succeed if they terminate at a vertex that has not yet volunteered
(i.e., a vertex that does not yet have a secondary out-going edge).

Note that the search tree $B_t$ is now a $k$-ary tree. The number of
potential volunteer leaves, however, remains as it was before, since
$$k^{c \log_k \log n} = \log^c n.$$

In order to analyze the new algorithm, the key observation that one
must make is that Lemma \ref{lem:bounding_subtree_options}, which
bounds the number of options for $B_t$, continues to hold exactly as
stated. In particular, $B_t$ is now a $k$-ary subtree of a
$(\log n)$-ary tree with depth $c \log_k \log n$. This means that each
node in $B_t$ that has depth less than $c \log_k \log n$ has up to
$\binom{\log n}{k} + 1 \le \log^k n$ options for what its set of
children can look like (with the set either being empty or being of
size $k$). Since the number of nodes in $B_t$ with depth less than
$c \log_k \log n$ is at most $2\frac{\log^c n}{k}$, the total number of
options for $B_t$ is at most,
$$\left(\log^k n\right)^{2\frac{\log^c n}{k}} \le \log^{2\log^c n} n,$$
which is precisely the bound shown by Lemma
\ref{lem:bounding_subtree_options} in the special case of $k = 2$.

Besides the proof of Lemma \ref{lem:bounding_subtree_options}, the
analysis of the Rank-Based Dancing-Walk Algorithm generalizes without
modification to apply to the new algorithm. Thus we arrive at the
following theorem:

\begin{theorem}
  Consider the Rank-Based Dancing-Walk Algorithm with maximum out-degree
  $k + 1$.  With high probability in $n$, the algorithm can process
  all of $e_1, \ldots, e_{n - 1}$ without declaring failure. If the
  algorithm does not declare failure, then each step flips
  $O(\log_k \log n)$ edges and takes $O(\log n \log_k \log n)$
  time. Additionally, no vertex's out-degree ever exceeds $k+1$.

  Additionally, the total running time of the algorithm to perform all
  edge insertions is at most $O(n)$, with high probability in $n$.
  \label{thm:tradeoff}
\end{theorem}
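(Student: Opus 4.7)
The plan is to show that the entire chain of lemmas from Appendices \ref{sec:alg} and \ref{sec:amortized} generalizes essentially verbatim, with the single nontrivial check being that the universe bound of Lemma \ref{lem:bounding_subtree_options} still gives $\log^{O(\log^c n)} n$ even when $B_t$ is $k$-ary rather than binary. Everything else is preserved by the choice of walk length $c \log_k \log n$, which keeps the number of potential volunteer leaves at exactly $k^{c\log_k \log n} = \log^c n$.

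First I would re-verify Lemma \ref{lem:bounding_subtree_options} in the $k$-ary setting. The Strong Load Balancing Property is unaffected (it comes purely from combination rank being bounded by $\log n$), so for any fixed vertex $v$ there are still at most $\log n$ candidate children $v$ could have in any search tree. Each internal node of $B_t$ therefore has at most $\binom{\log n}{k}+1 \le \log^k n$ options for its set of children, and since $B_t$ contains at most $2\log^c n / k$ internal nodes, the total number of possibilities is at most $(\log^k n)^{2\log^c n/k} = \log^{2\log^c n} n$, matching the original bound.

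Next I would observe that Lemma \ref{lem:bounding_volunteers} goes through unchanged, because its proof uses only (i) the Sparsity Property (each leaf volunteers in a given step with probability at most $d\log n/\log^c n$, which still holds since there are $\log^c n$ equally likely random walks of length $c\log_k \log n$), and (ii) the Load Balancing Property (unchanged). Combining this with the universe bound exactly as in the proof of Proposition \ref{prop:manyvolunteers} shows that with high probability at most $\tfrac{1}{2}\log^c n$ potential volunteer leaves in $B_t$ have already volunteered, so each random walk succeeds with probability at least $1/2$ and $O(\log n)$ attempts suffice. This gives the worst-case part: $O(\log_k \log n)$ flips, $O(\log n \log_k \log n)$ worst-case time per step, and maximum out-degree $k+1$ by construction.

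For the amortized $O(n)$ bound, the union-find replacement via combination rank (Lemma \ref{lem:combrank}) and the sum bound in Lemma \ref{lem:sum_ranks} involve no dependence on $k$ and transfer directly. The expected-length bound in Lemma \ref{lem:bound_random_walk_length} only uses that each step has probability at least $1/2$ of halving the residing subtree, which is a property of taking a uniformly random outgoing edge among $k\ge 2$ options, so it too generalizes. The potential-function argument of Lemma \ref{lem:sum_log_sizes} depends only on rank dynamics, not on $k$, so $\sum_t \E[q_t] = O(n)$ still holds. A Hoeffding-style concentration, as in Theorem \ref{thm:amortized}, using the worst-case per-step bound $O(\log n \log_k \log n)$, then converts this into a high-probability $O(n)$ total running time.

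The only step that genuinely requires care is the first: checking that the exponent arithmetic $(\log^k n)^{2\log^c n/k}$ really collapses to $\log^{2\log^c n} n$, so that Lemma \ref{lem:bounding_volunteers} combined with this universe size still beats a union bound (which requires $c$ to be a sufficiently large constant, as before). I do not anticipate any new obstacle beyond bookkeeping; the design choice of walk length $c \log_k \log n$ was made precisely so that the two opposing effects ($k$-way branching versus shorter depth) cancel in every bound used.
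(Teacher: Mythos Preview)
Your proposal is correct and follows the paper's own argument essentially verbatim: the paper identifies the recomputation of Lemma~\ref{lem:bounding_subtree_options} in the $k$-ary setting as the only nontrivial check, carries out exactly the calculation $(\log^k n)^{2\log^c n/k} = \log^{2\log^c n} n$ that you give, and then states that the remainder of the analysis from Appendices~\ref{sec:alg} and~\ref{sec:amortized} ``generalizes without modification.'' If anything, you are more explicit than the paper in verifying that the amortized lemmas (e.g., Lemma~\ref{lem:bound_random_walk_length}) transfer.
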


One interesting case of Theorem \ref{thm:tradeoff} is when
$k = \log^{1 / q} n$ for some value $q$. In this case, the algorithm
achieves maximum out-degree $\log^{1/q} n + 1$ while flipping only $O(q)$
edges per edge-insertion.

\section{Dynamic Cuckoo Hashing: Transforming Static Guarantees into Dynamic Guarantees}
\label{sec:cuckoo}

In this section we present the \defn{Dancing-Kickout Algorithm}
for maintaining a Cuckoo hash table. For any family of hash functions
$\mathcal{H}$ that provides a $1$-associative static guarantee, the
Dancing-Kickout Algorithm offers a $O(1)$-associative dynamic
guarantee using the same hash-function family $\mathcal{H}$.

\paragraph{Allowing for a stash} We will state our results so that
they also apply to Cuckoo hashing with a stash
\cite{aumuller2014explicit, kirsch2010more}. A Cuckoo hash table with
a \defn{stash} of size $s$ is permitted to store $s$ elements
\emph{outside} of the table in a separate list. Having a small stash
has been shown by past work to significantly simplify the problem of
achieving high-probability static guarantees
\cite{aumuller2014explicit} -- our results can be used to make these
guarantees dynamic.

\paragraph{What static guarantees promise: viability}
Let $h = (h_1, h_2)$ be a pair of hash functions mapping records to
$[n]$. A set $X$ of records is \defn{$h$-viable} if it is possible
to place the records $X$ into a $1$-associative $n$-bin Cuckoo hash
table using hash functions $h_1$ and $h_2$.

Even if a set of records $X$ is not $h$-viable, it may be that there
is a set of $s$ elements $Y$ for which $X \setminus Y$ is
$h$-viable. In this case, we say $X$ is \defn{$h$-viable with a stash
  of size $s$}.

\paragraph{Past static guarantees}
Past static guarantees \cite{mitzenmacher2008simple,
  patracscu2012power,aamand2018power,
  aumuller2014explicit,dietzfelbinger2003almost,aumuller2016simple,pagh2001cuckoo}
for a hash family $\mathcal{H}$, have taken the following form, where
$c \in (0, 1), p(n) \in \poly(n), s \in O(1)$ are parameters:
Every set of records $X$ of size $c n$ has probability at least
$1 - 1/p(n)$ of being $h$-viable with a stash of size $s$, where
$h = (h_1, h_2)$ is drawn from $\mathcal{H}$. In addition to
considering guarantees of this type, a fruitful line of work
\cite{mitzenmacher2008simple} has also placed additional
restrictions on the set $X$ of records (namely, that $X$ exhibits high
entropy). In this section, we will state our results in such a way so
that they are applicable to all of the past variants of static
guarantees that we are aware of.

\paragraph{Viability as a graph property}
 Define the \defn{Cuckoo graph
    $G(X, h)$} for a set of records $X$ and  for a pair of hash
  functions $h = (h_1, h_2)$ to be the graph with vertices $[n]$ and
  with (undirected) edges $\{(h_1(x), h_2(x)) \mid x \in X\}$. The
  problem of configuring where records should go in the hash table
  corresponds to an edge-orientation problem in $G$. In particular,
  one can think of each record $x$ that resides in a bin $h_i(x)$ as
  representing an edge $(h_1(x), h_2(x))$ that is oriented to face out
  of vertex $h_i(x)$. A set of records $X$ is $h$-viable if and only
  if the edges in $G(X, h)$ can be oriented to so that the maximum
  out-degree is $1$.

Similarly, a set of records $X$ is $h$-viable with a stash of size $s$
if and only if there are $s$ (or fewer) edges that can be removed from
the Cuckoo graph $G(X, h)$ so that the new graph $G'$ can be oriented
to have maximum out-degree $1$.

\paragraph{Applying static guarantees to dynamic settings}
In order to apply static guarantees in a dynamic setting, we define
the notion of a sequence of insert/delete operations satisfying a
static guarantee.

For $\epsilon \in (0, 1)$ and for a hash-function pair
$h = (h_1, h_2)$, we say that a sequence
$\Psi = \langle \psi_1, \psi_2, \ldots\rangle$ of insert/delete
operations is \defn{$(\epsilon, h)$-viable with a stash of size $s$}
if the following holds: for every subsequence of operations of the
form
$P_i = \langle \psi_{i \epsilon n + 1}, \psi_{i \epsilon n + 2},
\ldots, \psi_{(i + 1) \epsilon n} \rangle$, the set $X$ of records
that are present (at any point) during the operations $P_i$ has the
property that $X$ is $h$-viable with a stash of size $s$.

The dynamic guarantees in this section will assume only that the
sequence of operations $\Psi$ is $(\epsilon, h)$-viable (with a stash
of size $s$) for some known parameter $\epsilon \in (0, 1)$, and will
make no other assumptions about $\Psi$ or the hash-function pair
$h = (h_1, h_2)$.

Note that the property of being $(\epsilon, h)$-viable is a statement
about the sets of records $X$ that are present during windows of
$\epsilon n$ operations. If the table is always filled to capacity
$cn$, for some $c \in (0, 1)$, then the property of being
$(\epsilon, h)$-viable is a statement about sets of $(c + \epsilon) n$
records. Thus dynamic guarantees for tables on $cn$ records can be
derived from static guarantees that apply to tables of
$(c + \epsilon) n$ records. By making $\epsilon$ smaller, one can close the gap between the capacities for the static and dynamic guarantees -- but as we shall see, this also increases the constant in the algorithm's running time.

\paragraph{Our dynamic guarantee}
Formally, we say that an \defn{implementation of a $k$-associative
  Cuckoo hash table with a stash of size $s$} is an algorithm that
maintains a Cuckoo hash table with $n$ bins, each of size $k$, and
with a stash of size up to $s$. The implementation is given two hash
functions $h_1, h_2$, and every record $x$ in the table must either be
stored in one of the bins $h_1(x), h_2(x)$ or in the stash. The
implementation is permitted to maintain an additional $O(n)$-space
data structure $\mathcal{D}$ for additional bookkeeping, as long as
$\mathcal{D}$ is not modified by queries, and as long as each
insert/delete incurs at most $O(1)$ writes to $\mathcal{D}$.

We say that a Cuckoo hash table implementation satisfies \defn{the
  dynamic guarantee} on a sequence of operations $\Psi$, if:
\begin{itemize}
\item Each insert/delete operation incurs $O(\log \log n)$ kickouts
  and takes time $O(\log n \log \log n)$.
\item The amortized cost of each insert/delete operation is $O(1)$.
\end{itemize}

The goal of this section will be to describe an implementation of
Cuckoo hashing that offers the dynamic guarantee (with high
probability) as long as the underlying sequence of operations $\Psi$ is
$(\epsilon, h)$-viable. We call our implementation of Cuckoo hashing
the \defn{Dancing-Kickout Algorithm}.

The main result of the section is the following theorem.

\begin{theorem}
  Let $\epsilon \in (0, 1)$ and $s$ be constants ($s$ may be $0$). Let
  $h = (h_1, h_2)$ be a pair of hash functions. Let $\Psi$ be a sequence
  of $\poly(n)$ insert/delete operations that is
  $(\epsilon, h)$-viable with a stash of size $s$.

  Then, with high probability in $n$, the Dancing-Kickout Algorithm
  implements an $8$-associative Cuckoo hash table with a stash of size
  $s$ that satisfies the dynamic guarantee on $\Psi$.
\end{theorem}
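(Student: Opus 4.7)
The plan is to chop the operation sequence $\Psi$ into consecutive epochs of $\epsilon n$ operations each, analyze each epoch independently, and use the $(\epsilon, h)$-viability hypothesis to reduce each epoch to the incremental edge-orientation problem handled by Theorem~\ref{thm:amortized}. For epoch $E_i$, let $X_i$ be the union of all records ever present during $E_i$; $(\epsilon, h)$-viability tells me that $X_i$ is $h$-viable with a stash of size $s$, so the Cuckoo graph $G(X_i, h)$ becomes a pseudoforest (with max out-degree-$1$ orientation) after removing some set $Y_i$ of at most $s$ edges. The running invariant within $E_i$ is that the records currently in the table correspond to a subset of the edges of $G(X_i, h)$ that have been ``inserted so far'' during the epoch, oriented via the Rank-Based Dancing-Walk Algorithm.

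The second step is to make deletions compatible with the purely incremental Dancing-Walk algorithm, by making them \emph{cosmetic}. When a record is deleted, I erase it from its bin but leave the corresponding edge in the orientation structure for the remainder of the epoch; if the record is re-inserted within the same epoch, I simply drop it back into the slot dictated by its cached orientation. Consequently the orientation data structure only ever sees insertions during an epoch, its underlying edge set grows monotonically within a subgraph of $G(X_i, h)$, and Theorem~\ref{thm:amortized} applies: whp over the epoch, each insertion incurs $O(\log\log n)$ kickouts, at most $O(\log n \log\log n)$ time in the worst case, and $O(1)$ amortized time.

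The third step is to handle the exceptional edges, the stash, and the background rebuild. I reserve a stash of size $s$ to absorb ``bad'' edges: whenever Dancing-Walk would declare failure on an insertion, or the edge closes a cycle or would push some out-degree past the allowed $3$, the new record is put into the stash instead of the table. Cycle-closing edges within each pseudoforest component are accommodated by the built-in slack (out-degree up to $3$ rather than $1$) of the underlying orientation algorithm. For the background rebuild, I maintain two parallel tables using $4$ of the $8$ slots per bin each: the active table runs the live Rank-Based Dancing-Walk orientation, and the other serves as the target into which the live records are gradually migrated. The migration is spread uniformly across the $\epsilon n$ operations of the epoch, contributing only $O(1/\epsilon) = O(1)$ amortized work per operation, with $O(1)$ writes per operation to the auxiliary data structure. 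At the epoch boundary, the two tables swap roles, so the next epoch begins against a fresh orientation structure built from the current live set of records.

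The main obstacle I expect is the bookkeeping around the stash and the non-forest edges: since the algorithm must decide on the fly which edges to stash (without knowing $X_i$ in advance), I need to argue that the total number of online exceptions --- Dancing-Walk failures plus edges that would overflow out-degree~$3$ --- stays within the stash budget $s$ during each epoch. The right way to do this is to use the $(\epsilon, h)$-viable-with-stash-$s$ hypothesis as a structural statement about $G(X_i, h)$ and match each online exception to a member of the offline stash $Y_i$, exploiting the fact that Dancing-Walk fails only when the underlying edge set is \emph{not} orientable at all. Once this charging is in place, the dynamic guarantee on $\Psi$ follows by assembling the per-epoch high-probability bounds from Theorem~\ref{thm:amortized}, a union bound over the $\poly(n) / (\epsilon n)$ epochs, and the $O(1)$-amortized rebuild cost.
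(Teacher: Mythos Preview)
Your global architecture---epochs of $\epsilon n$ operations, two parallel $4$-slot tables with a gradual migration, reducing each epoch to an incremental orientation problem on a subgraph of $G(X_i, h)$---matches the paper, and your cosmetic treatment of deletions is a fine alternative to the paper's observation that the Dancing-Walk analysis survives edge disappearances.

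The genuine gap is in your stash handling. You propose to stash a record whenever ``Dancing-Walk would declare failure'' and then argue that ``Dancing-Walk fails only when the underlying edge set is not orientable at all.'' That claim is false: the Rank-Based Dancing-Walk Algorithm is Monte Carlo and can declare failure on a perfectly orientable forest; Theorem~\ref{thm:amortized} only says this is rare. So Dancing-Walk failure is a random event, not a structural certificate, and you cannot charge it against a member of the offline stash $Y_i$. Worse, your accounting of the structural exceptions is off: in a pseudoforest the number of cycle-closing (``bad'') edges is one per component, which can be $\Theta(n)$, not $\le s$, so the ``built-in slack (out-degree up to $3$ rather than $1$)'' does not absorb them---Dancing-Walk already uses all three out-slots on the forest part.

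The paper resolves this cleanly and purely structurally, never using Dancing-Walk failure as a stash trigger. It detects, via the combination-rank data structure, whether the newly inserted edge $(v,u)$ lands inside an existing component. If so, the edge is \emph{bad} and is oriented into a dedicated fourth slot (never fed to Dancing-Walk); this is why the per-table associativity is $4$, not $3$. An edge is stashed only if it is bad \emph{and} both endpoints are already incident to bad edges (``super bad''). A simple counting argument from the $h$-viable-with-stash-$s$ hypothesis then shows there are at most $s$ super-bad edges per epoch, so the stash never overflows. Dancing-Walk runs only on good (forest) edges, and its failure probability is handled separately by the high-probability bound, not by the stash.
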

\begin{proof}
  We take the approach of starting with a weaker version of the
  theorem and then working our way towards the full version. Initially we will consider only
  inserts, but no deletes or stash. Then we will consider only inserts
  and a stash, but no deletes. Then we will consider all of inserts,
  deletes, and a stash, but we will make what we call the
  \defn{full-viability assumption}, which is that the set $X$ of
  \emph{all} of records inserted and deleted by $\Psi$ is
  $h$-viable. Finally, we will show how to remove the full-viability
  assumption, thereby obtaining the full theorem.
  
  We begin by describing the Dancing-Kickout Algorithm in the case
  where $\Psi$ consists of only insertions (and no deletions). In this
  case, the algorithm only uses the first $4$ slots in each bin. We
  also begin with the simplifying assumption that the stash size $s$
  is $0$.

  The algorithm thinks of each record $x$ as representing an edge
  $(h_1(x), h_2(x))$ in the Cuckoo graph $G$. Since the set of records
  $X$ being inserted is $h$-viable, it must be that $G$ can be
  oriented with out-degree $1$. This means that each connected
  component in $G$ is a pseudotree (i.e., a tree with up to one
  additional edge added).

  In this case, the Dancing-Kickout Algorithm works as
  follows. Whenever an edge-insertion connects two vertices from
  different connected components, the Dancing-Kickout Algorithm
  simply uses the Rank-Based Dancing-Walk Algorithm to maintain an
  edge-orientation with maximum out-degree $3$. On the other hand,
  when an edge-insertion connects two vertices $v, u$ that are already
  in the same tree as one another (we call the edge $(v, u)$ a
  \defn{bad edge}), the Dancing-Kickout Algorithm orients the edge
  arbitrarily and then disregards that edge in all steps (i.e., the
  edge cannot be used as part of a random walk). Since $G$ is a
  pseudoforest, each vertex $v$ is incident to at most one bad edge;
  it follows that the maximum out-degree in the graph never exceeds
  $4$. This, in turn, means that no bin in the Cuckoo hash table stores
  more than $4$ items.

  Lemma \ref{lem:rankcorrectness} and Theorem \ref{thm:amortized}
  ensure that the edge-insertions involving good edges satisfy the
  dynamic guarantee with high probability in $n$ (that is, each
  operation takes time $O(\log n \log \log n)$, incurs
  $O(\log \log n)$ edge flips, and takes amortized time $O(1)$). The
  edge-insertions involving bad edges can be analyzed as follows. Note
  that the time for the Rank-Based Dancing-Walk Algorithm to identify
  that an edge $e = (v, u)$ is bad is just the height of the rank tree
  containing $v$ and $u$. Since combination ranks never exceed
  $O(\log n)$, the time to identify a bad edge is never more than
  $O(\log n)$. Since each rank-tree will have at most one bad edge
  identified in it (because each connected component contains at most
  $1$ bad edge), the total time spent identifying bad edges is at most
  the sum of the depths of the rank trees (at the end of all edge
  insertions); this, in turn, is $O(n)$ since the depth of each rank
  tree is never more than the number of elements it contains. Thus the
  operations in which bad edges are inserted do not cause the dynamic
  guarantee to be broken.

  Now we describe what happens if $\Psi$ still consists only of
  insertions, but a stash of size $s > 0$ is used. In this case, the
  Dancing-Kickout Algorithm places an edge $e = (v, u)$ in the stash
  (i.e., the algorithm places the record $x$ for which $h_1(x) = v$
  and $h_2(x) = u$ in the stash) if $e$ is a bad edge \emph{and} if
  both of the vertices $v$ and $u$ are already incident to bad
  edges. On the other hand, if one of $v$ or $u$ is not already
  incident to a bad edge, then the edge can be oriented out-going from
  that vertex (just as was the case without a stash). Call an edge $e$
  \defn{super bad} if, when $e$ is inserted, there is already a bad
  edge in the connected component containing $e$. Since $\Psi$ is
  $h$-viable with a stash of size $s$, the number of super bad edges
  is at most $s$.\footnote{To see this formally, note that there must
    be a set of at most $s$ edges $Y$ such that $X \setminus Y$ is a
    pseudoforest. That is, without the edges $Y$ there would be
    \emph{no} super bad edges. On the other hand, one can verify that
    placing each of the edges from $Y$ back into the sequence of edges
    $X \setminus Y$ adds at most $|Y|$ super bad edges, since each
    edge that is placed in can increase the number of super bad edges
    by at most $1$.} Because the Random-Walk Algorithm only stashes
  super bad edges, the algorithm is guaranteed to never stash more
  than $s$ records at a time. The running time of the algorithm on
  non-super-bad edges is the same as in the case of no stash; on the
  other hand, the $s$ super bad edges can contribute
  $s \cdot O(\log n) =  O(\log n)$ in total to the running time of
  the algorithm. Thus, with high probability, the Random-Walk Algorithm
  still satisfies the dynamic guarantee.

  Now we consider what happens if $\Psi$ contains deletes in addition
  to inserts. To begin, consider the special case where the set $X$ of
  \emph{all records} that $\Psi$ \emph{ever} inserts (including those
  that are subsequently deleted) has the property that $X$ is
  $h$-viable -- we call this the \defn{full-viability
    assumption}. Under the full-viability assumption, deletes can be
  implemented with \defn{tombstones}, meaning that when a record is
  deleted it is simply marked as deleted without actually being
  removed. In fact, the use of tombstones is not actually
  necessary. This is because the analysis of the Rank-Based Dancing-Walk
  Algorithm for edge-orientation continues to work without
  modification even if edges in the graph disappear arbitrarily over
  time, as long as all of the edges (\emph{including those that
    disappear}) form a forest. Thus, in the case where the
  full-viability assumption holds, we can simply implement deletes by
  removing the appropriate record from the table, and then we can use
  the Dancing-Kickout algorithm exactly as described so far. Since
  the Rank-Based Dancing-Walk Algorithm can handle edges disappearing, it
  follows that the Dancing-Kickout algorithm still satisfies the
  dynamic guarantee with high probability.
  
  Finally, we consider what happens if $\Psi$ contains both inserts
  and deletes, but without making the full-viability assumption. So
  far, we have only used the first $4$ slots of each bin. We now
  incorporate into the algorithm slots $5, 6, 7, 8$, and we modify the
  algorithm to gradually rebuild the table in phases, where
  consecutive phases toggle between using only slots $1, 2, 3, 4$ or
  using only slots $5, 6, 7, 8$; as we shall see, each phase is
  individually designed so that the running-time of its operations can
  be treated as meeting the full-viability assumption.

  In more detail, the algorithm performs gradual rebuilds as
  follows. The operations $\Psi$ are broken into phases
  $P_1, P_2, \ldots$ each consisting of $\epsilon n$ operations. At
  the beginning of each phase $P_i$ where $i$ is even (resp. $i$ is
  odd), the hash table uses only the slots $1, 2, 3, 4$ (resp.
  $5, 6, 7, 8$) in each bin. During the phase of operations $P_i$, any
  new insertions are performed with the Dancing-Kickout Algorithm
  using slots $5, 6, 7, 8$ (resp. $1, 2, 3, 4$). Also, during the
  $j$-th operation in the phase $P_i$, the algorithm looks at bin $j$,
  takes any records in slots $1, 2, 3, 4$ (resp. $5, 6, 7, 8$), and
  reinserts those records into the hash table using slots $5, 6, 7, 8$
  (resp.  $1, 2, 3, 4$).\footnote{Additionally, if a stash of size
    $s > 0$ is used, then the first operation of each phase $P_i$
    reinserts all of the elements in the stash, using only slots
    $5, 6, 7, 8$ if $i$ is even and only slots $1, 2, 3, 4$ if $i$ is
    odd.} Finally, deletes are implemented simply by removing the
  appropriate record $x$, regardless of what slot that record may be
  in.

  During a given phase $P_i$, the algorithm can be thought of as
  starting with a new empty Cuckoo hash table (consisting in each bin
  of either the slots $1, 2, 3, 4$ if $i$ is odd or $5, 6, 7, 8$ if
  $i$ is even). Then over the course of $P_i$, one can think of the
  algorithm as performing not only the operations in $P_i$, but also
  populating the new hash table with any elements that were present at
  the beginning of the phase $P_i$ (unless those elements are deleted
  before they have a chance to be re-populated). Let $X$ be the set of
  all records $x$ that are placed into the new hash table at some
  point during $P_i$ (this includes both elements that operations in
  $P_i$ act on, as well as elements that are re-inserted due to the
  gradual rebuild during the phase).  By the $(\epsilon, h)$-viability
  of $\Psi$, we know that $X$ is $h$-viable. This means that phase
  $P_i$ can be analyzed as satisfying the full-viability
  assumption. Thus, with high probability in $n$, the algorithm does
  not violate the dynamic guarantee during phase $P_i$. Since there
  are $\poly(n)$ phases, it follows that, with high probability in
  $n$, the algorithm never violates the dynamic guarantee.
\end{proof}

\section{Proof of Claim \ref{clm:chernoffbound}}
\label{app:chernoffbound}
  
\begin{proof}[Proof of Claim \ref{clm:chernoffbound}]
  Consider any (possibly randomized) adaptive algorithm for Alice, and
  let $X$ be the random variable denoting Alice's profit in the game.

  For $\lambda > 0$, define
  $$M_{k, \mu}(\lambda) = \E[e^{\lambda X}]$$ to be the moment
  generating function of $X$. The key claim is that
  \begin{equation}
    M_{k, \mu}(\lambda) \le e^{(e^\lambda - 1)\mu}.
    \label{eq:momgen}
  \end{equation}

  We prove \eqref{eq:momgen} by induction on $k$. Suppose that
  \eqref{eq:momgen} holds for $M_{k', \mu}(\lambda')$ for all $k' <
  k$, any $\lambda' > 0$, and any adaptive algorithm for Alice; as a
  base case, \eqref{eq:momgen} is immediate for $k = 0$. Let $X_1$ be
  a random variable for the profit Alice makes from her first bet and
  $X' = X - X_1$. For any value $p \le \mu$ that Alice may select for
  $p_1$,
  $$\E\Big[e^{\lambda X} \mid p_1 = p\Big] = p e^\lambda \cdot
  \E\Big[e^{\lambda X'} \mid X_1 = 1, \ p_1 = p\Big] + (1 - p) \cdot
  \E\Big[e^{\lambda X'} \mid X_1 = 0, \ p_1 = p\Big].$$ By the
  inductive hypothesis,
  $$\E\Big[e^{\lambda X'} \mid X_1 = 1, \ p_1 = p\Big],
  \ \ \ \E\Big[e^{\lambda X'} \mid X_1 = 0, \ p_1 = p\Big] \le
  e^{(e^{\lambda} - 1)(\mu - p)}.$$ Thus
  $$\E\Big[e^{\lambda x} \mid p_1 = p\Big] \le \left(p \cdot e^\lambda
  + (1 - p)\right) e^{(e^{\lambda} - 1)(\mu - p)}.$$ Using the
  identity, $1 + x \le e^x$ with $x = p \cdot (e^{\lambda} - 1)$, it
  follows that
  $$\E\Big[e^{\lambda x} \mid p_1 = p\Big] \le e^{p (e^\lambda - 1)}
  e^{(e^{\lambda} - 1)(\mu - p)} = e^{(e^{\lambda} - 1) \mu}.$$ Since this
  holds for all $p$, \eqref{eq:momgen} follows.
  
  Using \eqref{eq:momgen}, we can complete the proof of the lemma as
  follows. By Markov's inequality,
  $$\Pr\Big[X > (1 + \delta) \mu\Big] \le \Pr\Big[e^{\lambda X} >
    e^{\lambda (1 + \delta) \mu}\Big] \le \frac{\E\Big[e^{\lambda
        X}\Big]}{e^{\lambda (1 + \delta) \mu}}.$$ By
  \eqref{eq:momgen}, it follows that
  $$\Pr\Big[X > (1 + \delta) \mu\Big] \le \exp\left( \left(e^{\lambda}
  - 1 - \lambda (1 + \delta)\right) \mu\right).$$ Plugging in $\lambda
  = \ln(1 + \delta)$, which one can show by the derivative test
  minimizes the expression on the right side, yields
  $$\Pr\Big[X > (1 + \delta) \mu\Big] \le \exp\left((\delta - \ln(1 +
    \delta) (1 + \delta))\mu\right).$$
\end{proof}

\end{document}